\documentclass[sigconf, nonacm]{acmart}
\usepackage{xcolor}

\newcommand\vldbdoi{XX.XX/XXX.XX}
\newcommand\vldbpages{XXX-XXX}
\newcommand\vldbvolume{19}
\newcommand\vldbissue{7}
\newcommand\vldbyear{2026}
\newcommand\vldbauthors{\authors}
\newcommand\vldbtitle{\shorttitle} 
\newcommand\vldbavailabilityurl{https://github.com/pkumod/CEMR}
\newcommand\vldbpagestyle{empty} 

\newbool{fullversion}
\booltrue{fullversion}   

\usepackage{amsmath,amsfonts}
\usepackage{textcomp}
\usepackage{algorithmic}
\usepackage{graphicx}
\usepackage[noend, ruled,linesnumbered]{algorithm2e}
\usepackage{subcaption}
\usepackage{threeparttable}
\usepackage{soul}
\usepackage{multicol}
\usepackage{multirow}
\usepackage{bbm}
\usepackage{amsthm}
\usepackage{pifont}
\usepackage{enumitem}
\usepackage{tcolorbox}
\usepackage{listings}
\tcbuselibrary{listings,skins}
\usepackage{etoolbox}
\usepackage{dblfloatfix}

\newcommand{\nop}[1]{}
\newtheorem{example}{Example}
\newtheorem{lemma}{Lemma}

\usepackage{xspace}
\def\method{\texttt{CEMR}\xspace}
\def\plainmethod{CEMR\xspace}

\newcommand{\secref}[1]{Section \ref{#1}}
\newcommand{\figref}[1]{Figure \ref{#1}}
\newcommand{\tabref}[1]{Table \ref{#1}}

\newcommand{\algref}[1]{Algorithm \ref{#1}}
\newcommand{\thmref}[1]{Theorem \ref{#1}} 
\newcommand{\lemref}[1]{Lemma \ref{#1}}

\newcommand{\exgref}[1]{Example \ref{#1}}
\newcommand{\appref}[1]{Appendix \ref{#1}}

\newcounter{redcount} 
\newcommand{\red}[1]{%
  \stepcounter{redcount}
  \begingroup\color{red}#1\endgroup
}

\newcounter{bluecount} 

\SetCommentSty{mycommfont}

\newcommand{\nosection}[1]{\noindent\textbf{\emph{#1}}}

\newlength{\origtextfloatsep}

\begin{document}
\ifbool{fullversion}{
\title{CEMR: An Effective Subgraph Matching Algorithm with Redundant Extension Elimination (Full Version)}
}{
\title{CEMR: An Effective Subgraph Matching Algorithm with Redundant Extension Elimination}
}



\author{Linglin Yang}
\authornote{Linglin Yang and Xunbin Su contributed equally to this work.}
\affiliation{%
  \institution{Peking University}
  \city{Beijing}
  \country{China}
}
\email{linglinyang@stu.pku.edu.cn}

\author{Xunbin Su}
\authornotemark[1]
\affiliation{%
  \institution{Peking University}
  \city{Beijing}
  \country{China}
}
\email{suxunbin@pku.edu.cn}

\author{Lei Zou}
\affiliation{%
  \institution{Peking University}
  \city{Beijing}
  \country{China}
}
\email{zoulei@pku.edu.cn}

\author{Xiangyang Gou}
\affiliation{%
  \institution{University of New South Wales}
  \city{Sydney}
  \country{Australia}
}
\email{xiangyang.gou@unsw.edu.au}

\author{Yinnian Lin}
\affiliation{%
  \institution{Peking University}
  \city{Beijing}
  \country{China}
}
\email{linyinnian@pku.edu.cn}

\begin{abstract}

Subgraph matching is a fundamental problem in graph analysis with a wide range of applications. However, due to its inherent NP-hardness, enumerating subgraph matches efficiently on large real-world graphs remains highly challenging. Most existing works adopt a depth-first search (DFS) backtracking strategy, where a partial embedding is gradually extended in a DFS manner along a branch of the search trees until either a full embedding is found or no further extension is possible. A major limitation of this paradigm is the significant amount of duplicate computation that occurs during enumeration, which increases the overall runtime. To overcome this limitation, we propose a novel subgraph matching algorithm, \method. It incorporates two techniques to reduce duplicate extensions: \textit{common extension merging}, which leverages a black-white vertex encoding, and \textit{common extension reusing}, which employs common extension buffers. In addition, we design two pruning techniques to discard unpromising search branches. Extensive experiments on real-world datasets and diverse query workloads demonstrate that \method outperforms state-of-the-art subgraph matching methods.

\end{abstract}

\maketitle

\pagestyle{\vldbpagestyle}
\begingroup\small\noindent\raggedright\textbf{PVLDB Reference Format:}\\
\vldbauthors. \vldbtitle. PVLDB, \vldbvolume(\vldbissue): \vldbpages, \vldbyear.\\
\href{https://doi.org/\vldbdoi}{doi:\vldbdoi}
\endgroup
\begingroup
\renewcommand\thefootnote{}\footnote{\noindent
This work is licensed under the Creative Commons BY-NC-ND 4.0 International License. Visit \url{https://creativecommons.org/licenses/by-nc-nd/4.0/} to view a copy of this license. For any use beyond those covered by this license, obtain permission by emailing \href{mailto:info@vldb.org}{info@vldb.org}. Copyright is held by the owner/author(s). Publication rights licensed to the VLDB Endowment. \\
\raggedright Proceedings of the VLDB Endowment, Vol. \vldbvolume, No. \vldbissue\ %
ISSN 2150-8097. \\
\href{https://doi.org/\vldbdoi}{doi:\vldbdoi} \\
}\addtocounter{footnote}{-1}\endgroup

\ifdefempty{\vldbavailabilityurl}{}{
\vspace{.3cm}
\begingroup\small\noindent\raggedright\textbf{PVLDB Artifact Availability:}\\
The source code, data, and/or other artifacts have been made available at \url{https://github.com/pkumod/CEMR}.
\endgroup
}

\section{Introduction}
\label{sec:introduction}

In recent years, graph structures have gained increasing significance in the data management community. As one of the fundamental tasks in graph analysis, subgraph matching has attracted much attention from both industry and academia \cite{sahu2017ubiquity, livi2013graph, mahmood2017large} and is widely used in many applications, such as chemical compound search \cite{yan2004graph}, social network analysis \cite{fan2012graph}, protein-protein interaction network analysis \cite{kim2011biological} and RDF query processing \cite{zou2011gstore, kim2015taming}. 

Given a data graph $G$ and a query graph $Q$, subgraph matching (isomorphism) aims to find all the subgraphs of $G$ that are isomorphic to $Q$. The mapping between each of these subgraphs and $Q$ is called an embedding (or match) of $Q$ over $G$. For example, given a data graph $G$ and a query graph $Q$ in \figref{fig:graph_example}, \{($u_0$, $v_0$), ($u_1$, $v_1$), ($u_2$, $v_2$), ($u_3$, $v_5$), ($u_4$, $v_{8}$), ($u_5$, $v_{10}$), ($u_6$, $v_{11}$)\} is one embedding of $Q$ over $G$. Unfortunately, subgraph matching is a well-known NP-hard problem \cite{garey1979guide} and the data graph $G$ is usually very large in practical applications. Thus, it is challenging to efficiently enumerate all embeddings of $Q$ over $G$.

\begin{figure}[htbp]
    \centering
    \vspace{-0.2cm}
    \setlength{\abovecaptionskip}{0.3cm}
    \setlength{\belowcaptionskip}{-0.1cm}
    \begin{subfigure}[c]{0.24\linewidth}
        \includegraphics[width=\linewidth]{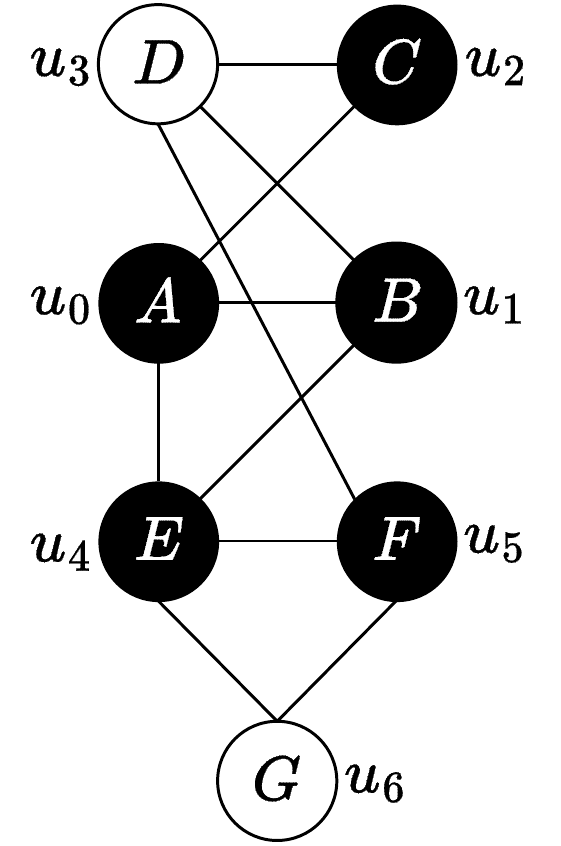}
        \captionsetup{skip=1.6pt}
        \caption{Query graph}
        \label{fig:graph_example_query}
    \end{subfigure}
    \hspace{0.2cm}
    \begin{subfigure}[c]{0.54\linewidth}
        \includegraphics[width=\linewidth]{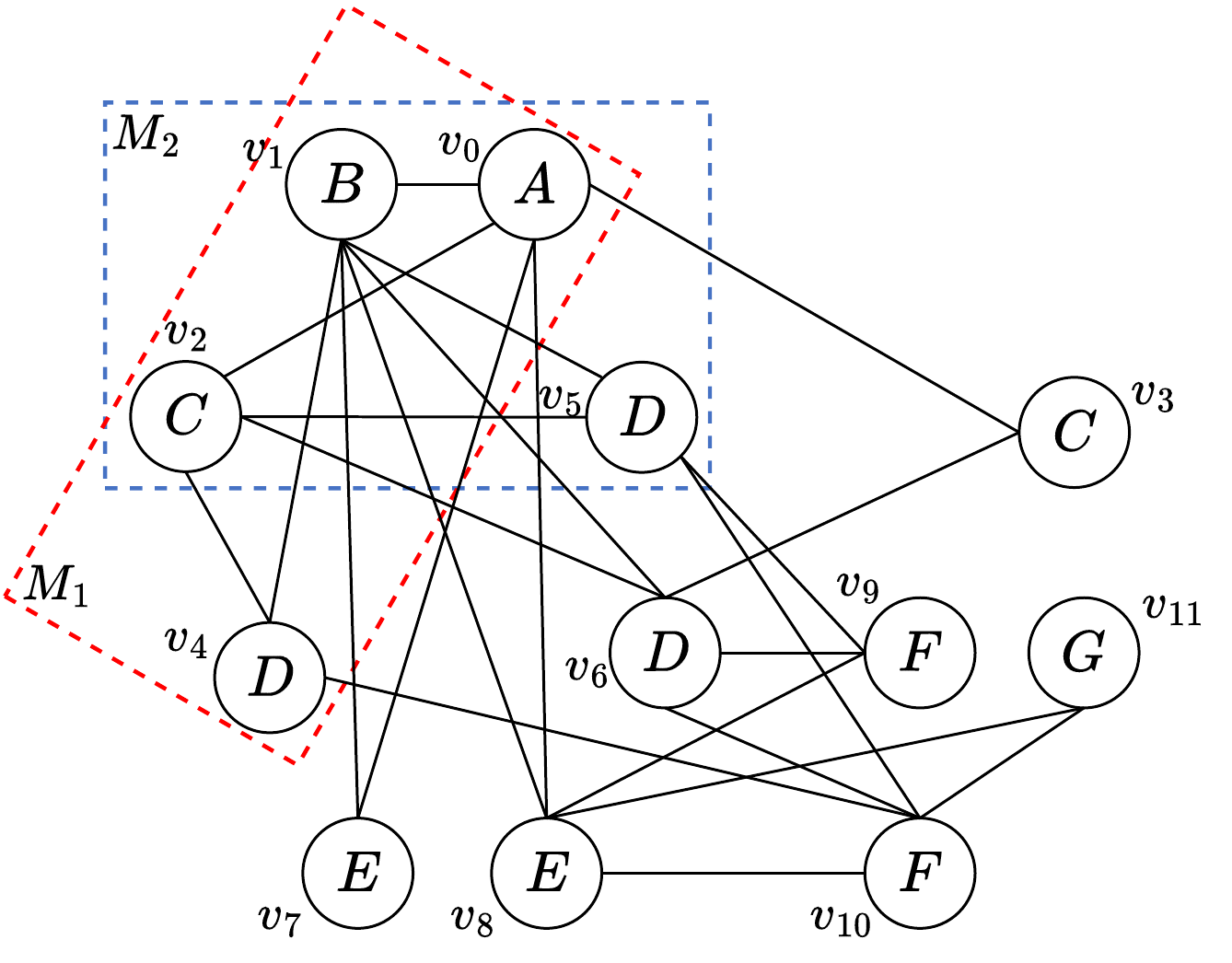}
        \captionsetup{skip=1.6pt}
        \caption{Data graph}
        \label{fig:graph_example_data}
    \end{subfigure}
    
    \begin{subfigure}[b]{\linewidth}
        \centering
        \includegraphics[width=0.83\linewidth]{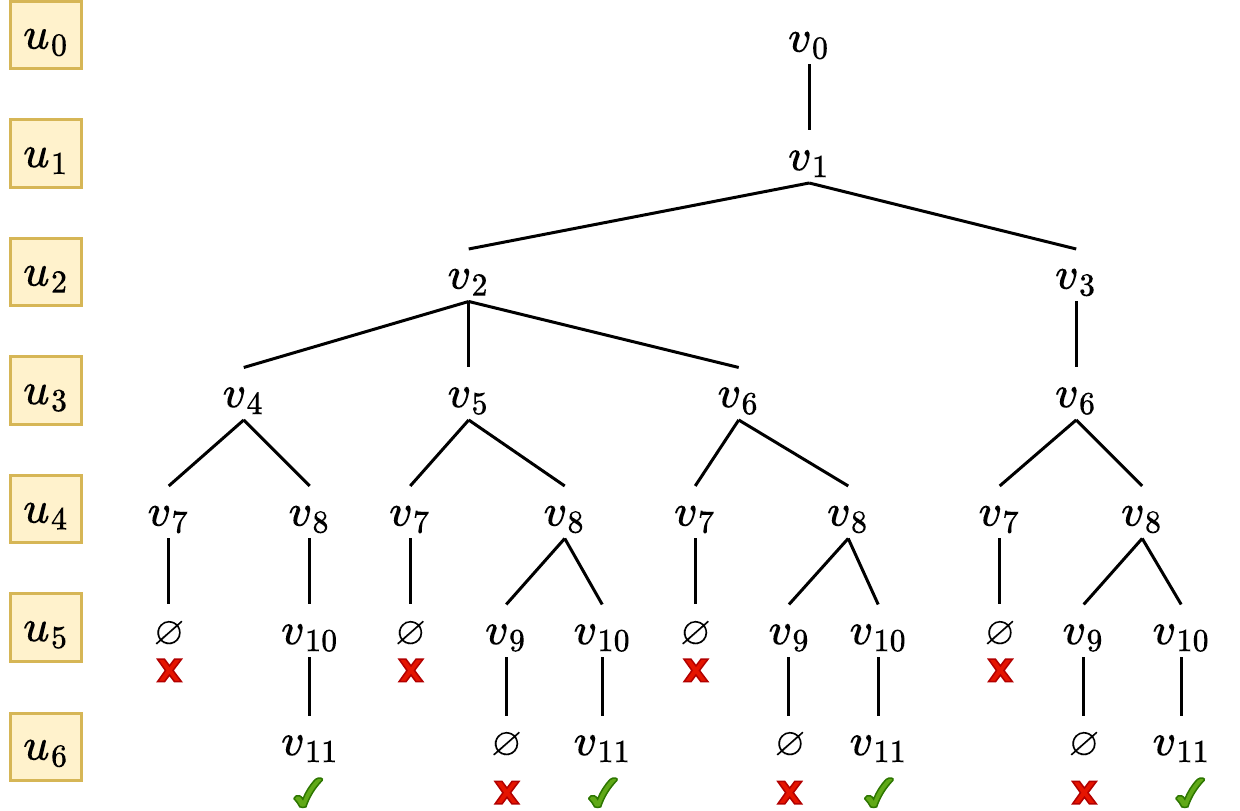}
        \captionsetup{skip=5pt}
        \caption{Search tree rooted at $\{(u_0, v_0)\}$}
        \label{fig:search_tree_example}
    \end{subfigure}
    \caption{Subgraph Matching}
    \label{fig:graph_example}
\end{figure}

Many algorithms \cite{shang2008taming, zhao2010graph, zhang2009gaddi, ren2015exploiting, cordella2004sub, han2013turboiso, bi2016efficient, kim2021versatile, han2019efficient, sun2020rapidmatch, jin2023circinus} have been proposed to solve the subgraph matching problem. A typical subgraph matching algorithm framework finds matches of the query graph $Q$ over the data graph $G$ by iteratively mapping each query vertex in $Q$ to data vertices in $G$ following a certain matching order. Such a matching process can be organized as a backtracking search on a search tree. 

For example, \figref{fig:search_tree_example} shows a search tree when mapping $u_0$ to $v_0$ from the previous example. Each tree node at level $i$ ($i\in\{0,\dots, |V(Q)|-1\}$) represents an embedding $M$ that matches the first $i+1$ query vertices in the matching order.
A green tick denotes a valid embedding, whereas a red cross denotes that the corresponding tree node cannot produce a valid embedding. 
We call an embedding at level $|V(Q)|-1$ \emph{full embedding} and an embedding at level less than $|V(Q)|-1$ \emph{partial embedding}. By adding matches of $u_{i+1}$ with different data vertices into a partial embedding $M$ at level $i$, we can extend $M$ to its children at level $i+1$. 

Most existing solutions adopt the \emph{depth-first search (DFS)} backtracking approach. They gradually extend one partial embedding following a certain branch in the search tree (i.e., perform DFS on the search tree) until a full embedding is found or no extension can be made. Then they backtrack to the upper level and try to search another branch. 
To speed up subgraph matching, existing solutions have proposed many heuristic optimization techniques on getting good matching orders \cite{shi2020graphpi, bi2016efficient, han2019efficient, juttner2018vf2++}, generating the smaller candidate space \cite{sun2020rapidmatch, han2013turboiso, han2019efficient, bhattarai2019ceci, bi2016efficient}, sharing computation among query vertices with equivalence relationship \cite{han2013turboiso, kim2021versatile, ren2015exploiting} and reducing duplicated computation \cite{ammar2018distributed, yang2021huge, trigonakis2021adfs, guo2020exploiting, jin2023circinus, mhedhbi2019optimizing, jamshidi2020peregrine, guo2020gpu}. 
Some of the remaining works \cite{shao2014parallel, lai2015scalable, afrati2013enumerating} adopt the \emph{breadth-first search (BFS)} approach where they find all partial embeddings at the same level of the search tree and then move on to the next level. \emph{BFS-based} solutions suffer from a large amount of memory consumption for storing intermediate results \cite{teixeira2015arabesque, chen2018g, jin2023circinus}. Furthermore, in cases where a query cannot be completed within reasonable time, \emph{DFS-based} methods are capable of obtaining a portion of the final embeddings within the given time constraint, whereas \emph{BFS-based} methods may yield no results at all. In this paper, we focus on optimizing \emph{DFS-based} approaches. 

\subsection{Motivation}

During the enumeration phase, there may exist duplicate extension computations. These duplicate computations usually occur at the same level of the search tree, where the backward neighbors of the next query vertex $u$ share the same mappings across different partial embeddings. The root cause is that the extension of $u$ depends only on its backward neighbors and is independent of other query vertices. These repeated calculations enlarge the search space and hinder enumeration performance. 
We use the following example to illustrate this phenomenon.

\begin{example}
\label{ex:motivation2}
Consider the data graph and query graph in \figref{fig:graph_example}, with matching order $O = (u_0, u_1, u_2, u_3, u_4, u_5, u_6)$.
Examine two partial embeddings:
$M_1 = \{(u_0, v_0), (u_1, v_1), (u_2, v_2), (u_3, v_4)\}$ (red dashed box) and
$M_2 = \{(u_0, v_0), (u_1, v_1), (u_2, v_2), (u_3, v_5)\}$ (blue dashed box).
Both share the prefix $(u_0, u_1) \mapsto (v_0, v_1)$. Notably, when extending to $u_4$, $M_1$ and $M_2$ perform the same extension and generate the same candidate set $\{v_7, v_8\}$, since $u_4$ only connects to $u_0$ and $u_1$. This redundancy implies that extending $M_2$ could reuse the results from $M_1$, allowing simultaneous or shared extension and avoiding repeated computation.
\end{example}

In BFS-based enumeration, it is straightforward to eliminate such redundancy by grouping partial embeddings that share the same extension patterns at the same level of the search tree, since a complete intermediate result table can be constructed in each iteration. However, the memory cost of BFS can be prohibitively high.
In contrast, when using DFS for enumeration, it becomes challenging to share common extension results among partial embeddings, as the repeated extensions may appear in different branches of the search tree. This necessitates a mechanism to reduce duplicate extension computations during the backtracking process.

\subsection{Our Solution}

To eliminate computation redundancy, we propose two core optimization techniques: \emph{common extension merging} (\texttt{CEM}) and \emph{common extension reusing} (\texttt{CER}), which together form our method \texttt{CEMR} (\textbf{C}ommon \textbf{E}xtension \textbf{M}erge and \textbf{R}eusing).

\texttt{CEM} enables joint extension of multiple search branches by merging them until divergence. To support this, we introduce a \emph{black-white vertex encoding} scheme (\secref{sec:merge}), which partitions query vertices into black and white. 
A white vertex can match multiple data vertices within a single partial embedding, e.g., treating $u_4$ as white resolves the issue illustrated in \exgref{ex:motivation2}.

\texttt{CER} leverages \emph{common extension buffers} (CEBs, see \secref{sec:cer_for_reusing}) to cache reusable extension results that can be shared across multiple partial embeddings, thereby reducing redundant computation.

In addition, we design two pruning techniques that efficiently identify and discard unpromising search branches during the backtracking enumeration process.

To summarize, we make the following contributions in this paper. 

\begin{itemize}[leftmargin=10pt]
    
    \item We propose a DFS-based subgraph matching algorithm, \method, that aims to reduce redundant computation during the enumeration phase.

    \item We develop a forward-looking common extension merging technique based on black-white vertex encoding, which effectively merges search branches with similar expansion behaviors. In addition, we propose a cost-driven encoding strategy designed to maximize computational cost reduction.

    \item We propose a backward-looking common extension reusing technique that caches and reuses extension results to avoid repeated computation.
    
    \item We introduce two pruning techniques that effectively eliminate unpromising search branches and redundant partial embeddings.
    
    \item We conduct extensive experiments on several real-world graph datasets. Results show that \method consistently outperforms state-of-the-art algorithms.
\end{itemize}

\section{Preliminary and Related Works}
\label{sec:preliminary}

\subsection{Problem Definition}
\label{sec:problem_definition}

In this paper, we focus on undirected, vertex-labeled, and connected graphs. Note that our techniques can be extended to more general cases (e.g., directed graphs and edge-labeled graphs, we provide some discussions on these extensions in \secref{sec:extensions}).

We use $G$ and $Q$ to denote the data graph and the query graph, respectively.
A (data) graph $G$ is a quadruple $G = (V, E, \Sigma, L_G)$, where $V(G)$ is the set of vertices, $E(G) \subseteq V(G) \times V(G)$ is the set of edges, $\Sigma$ is the set of labels, and $L_G$ is a labeling function that assigns a label from $\Sigma$ to each vertex in $V$.
Given $v \in V(G)$, $N(v)$ denotes the set of neighbors of $v$, i.e., $N(v) = \{v' \mid e(v, v') \in E(G)\}$. The degree of $v$, denoted by $d(v)$, is defined as $d(v) = |N(v)|$.
The query graph $Q$ is defined in the same way.

\begin{definition}[Subgraph Isomorphism]\label{def:subgraphisomorphsim}
Given a query graph $Q$ and a data graph $G$, $Q$ is subgraph isomorphic to $G$ if there exists an injective mapping function $M$ from $V(Q)$ to $V(G)$ such that:
\begin{itemize}[leftmargin = 10pt]
    \item (Label constraint) $\forall u \in V(Q), L_Q(u)=L_G(M[u])$;
    \item (Topology constraint) $\forall e(u, u')\in E(Q), e(M[u], M[u']) \in E(G)$.
\end{itemize}
Sometimes, the mapping function $M$ is also called an embedding.
\end{definition}

\begin{definition}[Subgraph Matching Problem]
Given a query graph $Q$ and a data graph $G$, the subgraph matching problem is to find all distinct subgraphs (\emph{embeddings}) of $G$ that are \emph{isomorphic} to $Q$.
\end{definition}

An embedding of an induced subgraph of $Q$ in $G$ is called a \textit{partial embedding}. In contrast, an embedding of the complete query graph $Q$ is referred to as a \textit{full embedding}.

\begin{definition}[Matching Order]
A matching order is a permutation of the query vertices, denoted as $O = (u_0, u_1, \dots, u_{n-1})$, where we assume $|V(Q)| = n$ throughout the rest of the paper. A valid matching order must preserve the connectivity of the query graph. That is, for each $i \in \{1, \dots, n - 1\}$, the subquery $Q_i$ induced by the first $i$ vertices $\{u_0, u_1, \dots, u_{i-1}\}$ in $O$ must be connected. For clarity, we consistently use this notation throughout the paper.
\end{definition}

\begin{definition}[Backward (Forward) Neighbors]
\label{def:BF_neighbors}
Given a matching order $O$, the set of backward neighbors $N^O_{-}(u)$ (forward neighbors $N^O_{+}(u)$) of a query vertex $u$ is the set of neighbors of $u$ whose indices in $O$ are smaller (greater) than $u$'s corresponding index $o(u)$ in $O$. Formally, $N^O_{-}(u_i) = \{u\mid u_j \in N(u), j < i\}$ and $N^O_{+}(u_i) = \{u_j\mid u_j \in N(u), j > i\}$.
\end{definition}

\begin{example} Considering the query graph $Q$ in \figref{fig:graph_example_query}, assume that the matching order is $O = (u_0, u_1, u_2, u_3, u_4, u_5, u_6)$. For $u_4$, it has four neighbors $u_0$, $u_1$, $u_5$, and $u_6$. Among them, $u_0$ and $u_1$ are matched before $u_4$, which are backward neighbors of $u_4$ (i.e., $N^O_{-}(u_4)=\{u_0, u_1\}$). In contrast, $u_5$ and $u_6$ are forward neighbors of $u_4$ (i.e., $N_{+}^O(u_4)=\{u_5, u_6\}$).
\end{example}

\tabref{tab:notations} lists the frequently used notations in this paper.

\begin{small}
\begin{table}[htbp]
    \centering
	\caption{Frequently used notations.}
	\label{tab:notations}
	\begin{tabular}{c|c}
		\toprule
		Notation & Description\\
		\midrule
		$G$, $Q$ & Data graph and query graph\\
		$v$, $u$ & Data vertex and query vertex\\
        $O$ & Matching order \\
        $Q_i$ & Induced subquery by the first $i$ vertices \\
        $N^O_{-}(u)$, $N^O_{+}(u)$ & Backward and forward neighbors of $u$ given $O$\\
        $BK(u)$, $WT(u)$ & Black and white backward neighbors of $u$\\
            $M$, $M[u]$ & A (partial) embedding , mapping of $u$ in $M$\\
		$C(u)$ & Candidate vertex set of $u$\\
            $\mathcal{A}$ & Auxiliary data structure \\
            $\mathcal{A}_{u'}^u(v)$ & Neighbors of $v$ in $C(u')$ where $v\in C(u)$\\
		$R_M(u)$ & Extensible vertices of $u$ under $M$\\
            $RS(u)$ & Reference set of $u$\\
            $Con(u)$ & Contained vertex set of $u$\\
		\bottomrule
	\end{tabular}
\end{table}
\end{small}

\subsection{Background and Related Work}\label{sec:relatedwork}

Most recent studies \cite{han2013turboiso, bi2016efficient, bhattarai2019ceci, han2019efficient, sun2020rapidmatch, kim2021versatile, arai2023gup, choi2023bice, lu2025b} on subgraph matching are based on the \textit{preprocessing-enumeration} framework \cite{sun2020memory} as shown in \algref{algo:generic_framework}. They first find candidates of query vertices and edges to build an online auxiliary structure, then enumerate all embeddings with the help of this auxiliary structure.

\begin{small}
\begin{algorithm}[t]
	\caption{Generic framework of \emph{indexing-enumeration}}
	\label{algo:generic_framework}
	\SetAlgoLined
    \DontPrintSemicolon
    \SetKwFunction{Enumerate}{Enumerate}
	\KwIn{Query graph $Q$, data graph $G$.}
	\KwOut{All embeddings of $Q$ in $G$, denoted as $\mathcal{M}$.}
	$C$, $\mathcal{A}$ $\leftarrow$ generate candidate vertices and an auxiliary index\; 
	$O$  $\leftarrow$ get the matching order\; 
    $\mathcal{M} \leftarrow \emptyset$\; 
	Enumerate($Q$, $\mathcal{A}$, $O$, $\mathcal{M}$, $\{\}$, $0$)\;
	\textbf{return} $\mathcal{M}$\;
\end{algorithm}
\end{small}

\subsubsection{Preprocessing Phase}

In Algorithm \ref{algo:generic_framework}, the preprocessing phase consists of filtering and matching order generation.
Specifically, it first generates a candidate vertex set $C$ ($C(u)$ for each $u \in V(Q)$) and builds an auxiliary data structure $\mathcal{A}$ to maintain the candidate edges between the candidate vertex sets (line 1). 
Then, the algorithm generates a matching order $O$ usually in a greedy way based on some heuristic rules (line 2).

\subsubsection{Enumeration Phase}

After the preprocessing phase, \algref{algo:generic_framework} recursively enumerates matches along the matching order $O$ (line 4). The enumeration procedure is detailed in \algref{algo:set_intersection_framework}, which consists of two key steps. Given the current partial embedding $M$, line 3 computes all extensible vertices $R_M(u_i)$ for the next query vertex $u_i$. Then, lines 4-5 recursively extend the partial embedding by exploring all candidates in $R_M(u_i)$.

\begin{small}
\begin{algorithm}[htbp!]
	\caption{Enumerate($Q$, $\mathcal{A}$, $O$, $\mathcal{M}$, $M$, $i$)\ (Basic)}
	\label{algo:set_intersection_framework}
	\SetAlgoLined
    \DontPrintSemicolon
    \KwIn{The query $Q$, auxiliary structure $\mathcal{A}$, matching order $O$, result set $\mathcal{M}$, an embedding $M$ of $Q_{i}$, and the backtracking depth $i$.}
	\If{$i$ = $|V(Q)|$}{
	    $\mathcal{M} \leftarrow \mathcal{M}\cup \{M\}$, \textbf{return}\;
	}
	$R_M(u_i) = \cap_{u \in N^O_{-}(u_i)}$ $\mathcal{A}_{u_i}^u(M[u])$\;
	\ForEach{$v \in R_M(u_i)$}{
	    Enumerate($Q$, $\mathcal{A}$, $O$, $\mathcal{M}$, $M \cup \{(u_i, v)\}$, $i+1$)\;
        
	}
\end{algorithm}
\end{small}

\subsubsection{Related Works of Preprocessing Phase}

The label degree filter (LDF) and neighbor label filter (NLF) \cite{zhu2012treespan} are the two most widely used filtering methods. They check the label and degree of a data vertex and its neighborhood to eliminate vertices that cannot be matched to a query vertex.
In addition to vertex candidates, CFL \cite{bi2016efficient} also builds an auxiliary structure to maintain the edge candidates of a BFS tree in the query graph, enabling more powerful filtering. CECI \cite{bhattarai2019ceci} and DAF \cite{han2019efficient} further consider the non-tree edges of the query graph in their auxiliary structures.
RM \cite{sun2020rapidmatch} utilizes the semi-join operator to quickly eliminate dangling tuples. VC \cite{sun2020subgraph} and VEQ \cite{kim2021versatile} adopt more sophisticated filtering rules to produce tighter candidate sets, at the cost of increased filtering time.

Most existing subgraph matching methods use heuristic vertex ordering strategies.
RI \cite{bonnici2013subgraph} and RM \cite{sun2020rapidmatch} determine the query vertex order solely based on the structure of the query graph, prioritizing vertices from its dense regions.
Other methods adopt different heuristics, such as selecting vertices with higher degrees and smaller candidate sets \cite{bi2016efficient, he2008graphs}.
DAF \cite{han2019efficient} and VEQ \cite{kim2021versatile} further propose adaptive ordering strategies, where the choice of the next query vertex may vary across different partial embeddings.

\subsubsection{Related Works of Enumeration Phase}

As the enumeration phase is usually the most time-consuming part of subgraph matching, many works have proposed various optimizations to accelerate it, mainly by reducing redundant computation. These optimizations can be broadly classified into two categories. 
The first is backward-looking optimization \cite{han2019efficient, arai2023gup, kim2021versatile, sun2020rapidmatch}, which caches intermediate results to avoid redundant extensions.
The second is forward-looking optimization \cite{jin2023circinus, li2024newsp, yang2023fast}, which merges multiple search branches and extends them simultaneously in subsequent steps.

\textbf{Backward-Looking Optimization}. 
Backward-looking optimization leverages historical extensions to accelerate subsequent ones. It typically serves as a pruning technique to discard unpromising search branches based on past extension results.

DAF \cite{han2019efficient} proposes a \emph{failing set} pruning strategy to eliminate unpromising search branches. The basic idea is: Given a partial embedding $M$, if $M$ cannot be extended to form a complete match of $Q$, some other search branches (satisfying the failing set conditions) can be skipped without further computation. This technique is later adopted by RM \cite{sun2020rapidmatch} and VEQ \cite{kim2021versatile}.

GuP \cite{arai2023gup} extends the failing set idea by introducing guard-based pruning, which retains discovered unpromising partial matches for repeated pruning at the cost of additional memory usage. However, these methods focus solely on pruning invalid extensions and do not reuse valid historical extension results.

VEQ \cite{kim2021versatile} proposes a pruning strategy based on dynamic equivalence, which can eliminate equivalent search subtrees, including both promising and unpromising ones. For each query vertex $u$, a dynamic equivalence class is defined over the filtered candidate set $C(u)$. Two candidates $v_i$ and $v_j$ in $C(u)$ belong to the same class if they share common neighbors, i.e., $\mathcal{A}_{u'}^{u}(v_i)=\mathcal{A}_{u'}^{u}(v_j)$ for every $u' \in N(u)$. This idea was later also adopted by BICE \cite{choi2023bice}. However, its effectiveness depends on the number of equivalent candidate vertices. As a result, VEQ may perform poorly on data graphs with high average degrees or limited equivalence among candidates. 

\textbf{Forward-Looking Optimization}. 
Forward-looking optimization does not utilize historical results. Instead, it directly merges certain extension branches when it predicts that they will produce similar extension results.

Circinus \cite{jin2023circinus} proposes a merging strategy based on a vertex cover of the query graph.
For a query vertex $u_i$ not included in the vertex cover, all of its backward neighbors must be in the cover.
In such cases, given a partial embedding $M$, Circinus merges the candidate mappings of $u_i$ and compresses multiple extended partial embeddings into a group, delaying their expansion until $u_i$'s forward neighbors are processed.
On the other hand, if $u_i$ is in the vertex cover and has at least one backward neighbor not in the cover, then the compressed embeddings will be decomposed.
Specifically, the merged mapping set of this backward neighbor will be split according to the different mappings of $u_i$, and each decomposed embedding will be handled separately.
A continuous subgraph matching method, CaLiG \cite{yang2023fast}, also adopts this strategy.

BSX \cite{lu2025b} models the search space as a multi-dimensional search box. During the backtracking enumeration phase, it iteratively selects one dimension, which corresponds to a query vertex from the vertex cover, and chooses several similar data vertices for that dimension. Then, it refines the other dimensions accordingly. At the final enumeration layer, the homomorphic matches within the search box are verified and form the final matching results.

Some other works aim to identify unmatchable search branches in advance. For example, BICE \cite{choi2023bice} prunes branches that will result in conflicts by applying bipartite matching on the bipartite graph between unmapped query vertices and their candidate data vertices.

\section{Theoretical Foundation}
\label{sec:overview}

In this paper, we propose \underline{C}ommon \underline{E}xtension \underline{M}erging (\texttt{CEM}) and \underline{C}ommon \underline{E}xtension \underline{R}eusing (\texttt{CER}) techniques, which can be viewed as \emph{forward-looking} and \emph{backward-looking} optimizations, respectively.
Before diving into the details, we first study the following lemma, which lays the foundation for both \texttt{CEM} and \texttt{CER}.
\ifbool{fullversion}{The proof can be found in Appendix \ref{proof:redundant}.}{\red{The proof can be found in the full version of our paper \cite{yang2025neuso}.}}

\begin{lemma}
\label{lemma:redundant}
Given a query graph $Q$ and a matching order $O = (u_0, u_1, \dots, u_{n-1})$, let $M_1$ and $M_2$ be matches of the subquery $Q_i$ induced by the first $i$ vertices. Assume that $M_1$ and $M_2$ share the same mappings for all backward neighbors of $u_i$ (i.e., $N_-^O(u_i)$), and let $v$ be a data vertex that does not conflict with any vertex in $M_1$ or $M_2$. Then, if $M_1 \oplus v$ is a valid match of the subquery $Q_{i+1}$, so is $M_2 \oplus v$. Here, $M \oplus v$ denotes extending $M$ along the matching order by appending the mapping $(u_{|M|}, v)$.
\end{lemma}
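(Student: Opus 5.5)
The plan is to unfold the definition of a valid match of $Q_{i+1}$ (Definition~\ref{def:subgraphisomorphsim} applied to the induced subquery) and check the three requirements for $M_2 \oplus v$ one at a time: injectivity, the label constraint, and the topology constraint. The only new mapping in $M_2 \oplus v$ is $(u_i, v)$. Every constraint that involves only vertices of $Q_i$ already holds because $M_2$ is a valid match of $Q_i$. So only the constraints that involve $u_i$ need to be verified.

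\emph{Injectivity.} $M_2$ is injective, and $v$ does not conflict with any vertex of $M_2$ by hypothesis, i.e.\ $v \notin \{M_2[u_j] : j<i\}$. Hence $M_2 \oplus v$ is injective.

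\emph{Label constraint.} Since $M_1 \oplus v$ is valid, $L_Q(u_i) = L_G(v)$. This fact does not depend on $M_1$, so it transfers unchanged to $M_2 \oplus v$.

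\emph{Topology constraint.} This is the only step with content. The edges of $Q_{i+1}$ that are not already edges of $Q_i$ are exactly the edges $e(u_i,u)$ with $u \in V(Q_i)\cap N(u_i)$. By Definition~\ref{def:BF_neighbors}, this set is $N_-^O(u_i)$. For each such $u$, validity of $M_1 \oplus v$ gives $e(M_1[u], v) \in E(G)$. The hypothesis $M_1[u]=M_2[u]$ for all $u\in N_-^O(u_i)$ then gives $e(M_2[u], v)\in E(G)$.

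The main (mild) obstacle is being precise that the new edges introduced by $u_i$ are exactly those to $N_-^O(u_i)$. This follows because $Q_{i+1}$ is the subgraph induced by $\{u_0,\dots,u_i\}$, so any neighbor of $u_i$ inside it has a smaller index. The connectivity requirement on $O$ plays no role beyond guaranteeing that $Q_{i+1}$ is well defined.

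If the paper's notion of ``valid'' also requires membership in the candidate structure, i.e.\ $v \in R_M(u_i)$, the same argument applies. By line 3 of \algref{algo:set_intersection_framework}, $R_{M}(u_i)=\cap_{u\in N_-^O(u_i)} \mathcal{A}^u_{u_i}(M[u])$ depends only on the mappings of $N_-^O(u_i)$. Therefore $R_{M_1}(u_i)=R_{M_2}(u_i)$, and $v$ lies in both.
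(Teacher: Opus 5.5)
Your proof is correct, but its main body takes a somewhat different route from the paper's. The paper argues entirely through the extensible set: validity of $M_1 \oplus v$ gives $v \in \bigcap_{u \in N_-^O(u_i)} \mathcal{A}_{u_i}^u(M_1[u])$. That set is unchanged when $M_1$ is replaced by $M_2$, because the two agree on $N_-^O(u_i)$. Non-conflict then gives validity of $M_2 \oplus v$. This is exactly your closing paragraph.

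Your primary argument instead unfolds Definition~\ref{def:subgraphisomorphsim} directly. It isolates the structural reason the lemma holds: the only edges of $Q_{i+1}$ not already in $Q_i$ are those joining $u_i$ to $N_-^O(u_i)$. Together with the label check and the injectivity check, this needs nothing about the filtering stage. That is a real advantage. The paper's first step tacitly equates ``$M_1 \oplus v$ is a valid match of $Q_{i+1}$'' with ``$v \in R_{M_1}(u_i)$''. That holds in the algorithm's operational sense, but not automatically under Definition~\ref{def:subgraphisomorphsim}. The filtered candidate sets behind $\mathcal{A}$ need only retain vertices that occur in full embeddings, so they may discard vertices that appear only in partial matches that never complete.

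The paper's version is shorter and is phrased in the form that \texttt{CEM} and \texttt{CER} actually use, where extensions are computed as intersections over $\mathcal{A}$. Since you cover both readings, your proposal loses nothing relative to the paper.
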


Let us revisit \exgref{ex:motivation2}. Two partial embeddings $M_1$ and $M_2$ for $Q_4$ share the same mappings for $u_0$ and $u_1$, which are the backward neighbors of $u_4$. Therefore, for each candidate vertex $v$ of $u_4$ (i.e., $v \in \{v_7, v_8\}$), if $M_1 \oplus v$ forms a valid match of the subquery $Q_5$, then so does $M_2 \oplus v$.

\lemref{lemma:redundant} offers the most effective opportunity to eliminate redundant extensions during enumeration, as all valid extensions of $u_i$ for $M_1$ can be directly reused for $M_2$ without any additional computation. 
In BFS-based approaches, when extending $u_i$, we can group all partial embeddings that share the same mappings for $u_i$'s backward neighbors to avoid repeated work. However, doing so is non-trivial in DFS-based methods.
Our proposed techniques, \texttt{CEM} (\secref{sec:merge}) and \texttt{CER} (\secref{sec:reuse}), serve as practical relaxations of this lemma to facilitate extension sharing in DFS-based enumeration.

\section{Common Extension Merging}
\label{sec:merge}

The intuition behind \texttt{CEM} is to merge multiple partial embeddings into an aggregated embedding and extend them together. For instance, if $u_i$ is not connected to $u_{i+1}$, all mapped data vertices of $u_i$ can be aggregated together, since extending $u_{i+1}$ does not depend on $u_i$ (according to Lemma \ref{lemma:redundant}). Based on this, we propose the following \emph{black-white vertex encoding} and \emph{aggregated embedding}. 

\subsection{Black-white Vertex Encoding \& Aggregated Embedding}

\begin{definition}[Black-White Vertex Encoding \& Aggregated Embedding]
\label{def:vertex encoding}
Given a query graph $Q$ with a matching order $O = (u_0, u_1, \dots, u_{n-1})$, each query vertex $u_i \in V(Q)$ is assigned a color $c(u_i) \in \{\text{black}, \text{white}\}$. An \emph{aggregated embedding} $M$ represents multiple embeddings and satisfies:
\begin{itemize}[leftmargin=10pt]
\item $M[u]$ is a single data vertex if $c(u)=\text{black}$;
\item $M[u]$ is a set of data vertices if $c(u)=\text{white}$.
\end{itemize}

\end{definition}

If there is no ambiguity, the ``embedding'' mentioned in the rest of this paper refers to an aggregated embedding. And we denote the black (white) backward neighbors of $u$ as $BK(u)$ ($WT(u)$).

\begin{example}
Consider again the query graph $Q$ and data graph $G$ in \figref{fig:graph_example}.
Let the matching order be $O = (u_0, \cdots, u_6)$, where $u_3$ and $u_6$ are encoded as white vertices.
When $u_0$, $u_1$, and $u_2$ are mapped to $v_0$, $v_1$, and $v_3$, respectively, $u_3$ has only one candidate $v_6$, yielding the aggregated embedding
$\{(u_0, v_0), (u_1, v_1), (u_2, v_3), (u_3, \{v_6\})\}$.
In contrast, mapping $(u_0, u_1, u_2)$ to $(v_0, v_1, v_2)$ produces
$\{(u_0, v_0), (u_1, v_1), (u_2, v_2),$ $(u_3, \{v_4, v_5, v_6\})\}$,
which compactly represents three embeddings.
\end{example}

\subsection{Four Cases of Extension}
\label{sec:four_cases}

Based on the black-white vertex encoding, we propose a common extension merging enumeration framework, as shown in \algref{algo:hybrid_enumerate}. This framework computes the extensions of $Q_{i+1}$ in the form of aggregated embeddings, based on an aggregated embedding $M$ of $Q_{i}$.
In summary, there are four cases for extending $u_i$, classified according to the encoding of $u_i$ and the encodings of its backward neighbors $N_-^O(u_i)$.

\nosection{Case 1.} $u_i$ is a black vertex, and all its backward neighbors $N_-^O(u_i)$ are black vertices.

Case 1 is equivalent to the standard extension framework described in \algref{algo:set_intersection_framework}. 
In this case, we directly compute the set of extensible vertices $R_M(u_i)$, and extend $M$ by mapping $u_i$ to each data vertex $v \in R_M(u_i)$, resulting in an embedding $M\oplus v$ of $Q_i$. 
\figref{fig:case1_eg} illustrates an example of Case 1. For illustration purposes, we treat $u_3$ as a black vertex in this example.

\begin{figure}[htbp]
	\centering
	\setlength{\abovecaptionskip}{0.2cm}
    \begin{subfigure}[c]{0.47\linewidth}
	    \includegraphics[width=\linewidth]{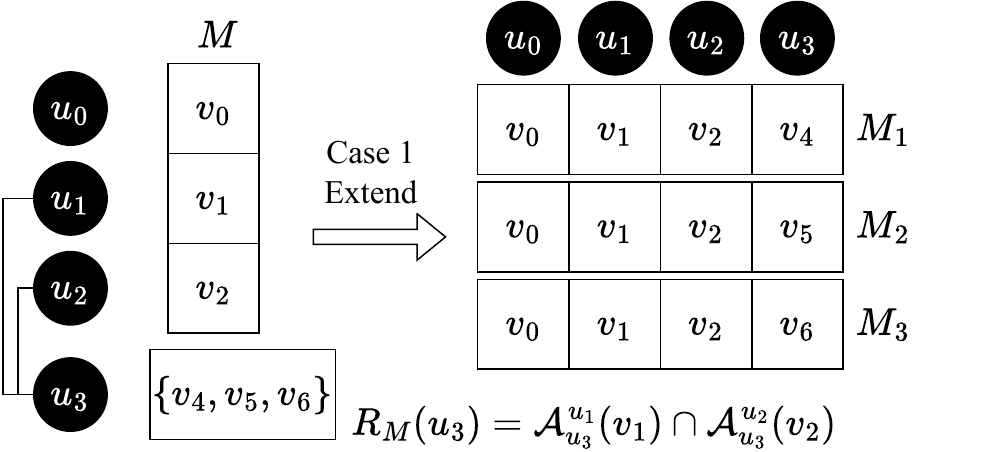}
        \caption{Case 1.}
		\label{fig:case1_eg}
	\end{subfigure}
    \hspace{0.2cm}
    \begin{subfigure}[c]{0.47\linewidth}
        \includegraphics[width=\linewidth]{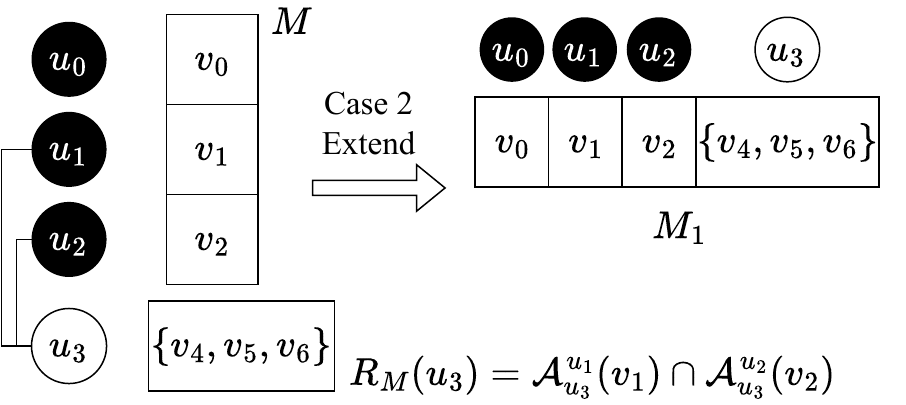}
        \caption{Case 2.}
		\label{fig:case2_eg}
        \end{subfigure}
	\caption{Illustrations of Case 1 and Case 2.}
	\label{fig:case12_eg}
\end{figure}

\nosection{Case 2.} $u_i$ is a white vertex, and all its backward neighbors $N_-^O(u_i)$ are black vertices.

White encoding of $u_i$ indicates an opportunity for extension merging. Specifically, after computing the extensible vertices $R_M(u_i)$ (same as Case 1), we aggregate them as the mapping for $u_i$ to form an aggregated embedding $M \oplus R_M(u_i)$. \figref{fig:case2_eg} illustrates this extension pattern. Note that in Case 2, the extension branches $\{M \oplus v_i \mid v_i \in R_M(u_i)\}$ are merged into a single aggregated embedding to reduce redundant extensions.

\setlength{\textfloatsep}{-5pt}    
\begin{small}
\begin{algorithm}
    \caption{Enumerate($Q$, $\mathcal{A}$, $O$, $\mathcal{M}$, $M$, $i$) (black-white enumeration framework)}
    \label{algo:hybrid_enumerate}
    \SetAlgoLined
    \DontPrintSemicolon
    \SetKwFunction{Enumerate}{Enumerate}
    
    \KwIn{The query $Q$, auxiliary structure $\mathcal{A}$, matching order $O$, result set $\mathcal{M}$, an (aggregated) embedding $M$ of $Q_{i}$, and the backtracking depth $i$.}
    \If{$i$ = $|V(Q)|$}{
        Append valid full embeddings in $M$ to $\mathcal{M}$; \textbf{return}\;
    }
    $BK(u_i) (WT(u_i)) \leftarrow$ $\{ u\mid c(u) = black\, (white), u \in N^O_-(u_i) \}$\;
    \If(\tcp*[f]{Case 1 or Case 2}){$WT(u_i) = \emptyset$}{
        $R_M(u_i)\leftarrow \cap_{u_j\in BK(u_i)}\mathcal{A}_{u_i}^{u_j}(M[u_j])$\;
        \If(\tcp*[f]{Case 1}){$c(u_i)=black$} {
            \ForEach {$v \in R_M(u_i)$}{
                \Enumerate($Q$, $\mathcal{A}$, $O$, $\mathcal{M}$, $M \oplus v$, $i+1$)\;
            }
        }
        \Else(\tcp*[f]{Case 2}){
            \Enumerate($Q$, $\mathcal{A}$, $O$, $\mathcal{M}$, $M \oplus R_M(u_i)$, $i+1$)\;
        }
    }
    \Else(\tcp*[f]{Case 3 or Case 4}){
        \If{$BK(u_i) \neq \emptyset$} {
            $R_M(u_i)\leftarrow \cap_{u_j\in BK(u_i)}\mathcal{A}_{u_i}^{u_j}(M[u_j])$\;
        }
        \Else {
            $u_j\leftarrow$ $\arg\min_{u\in WT(u_i)} |M[u]|$\;
            $R_M(u_i)\leftarrow\cup_{v \in M[u_j]} \mathcal{A}_{u_i}^{u_j}(v)$\;
        }
        \If(\tcp*[f]{Case 3}){$c(u_i)=black$} {
            \ForEach{$v\in R_M(u_i)$} {
                $M_v\leftarrow M$\;
                \ForEach{$u_j \in WT(u_i)$}{
                    $M_v[u_j] \leftarrow M[u_j] \cap \mathcal{A}_{u_j}^{u_i}(v)$\;
                }
                \If{$\forall u_j \in WT(u_i), M_v[u_j] \neq \emptyset$}{
                        \Enumerate($Q$, $\mathcal{A}$, $O$, $\mathcal{M}$, $M_v \oplus v$, $i+1$)\;
                }
            }
        }
        \Else(\tcp*[f]{Case 4}){
            $\mathcal{S} \leftarrow$decompose $WT(u_i)$'s mappings\;
            \If(\tcp*[f]{Case 4.1}){$|\mathcal{S}|  \geq|R_M(u_i)|$}{
                Same pseudocode as lines 18-23\;
            }
            \Else(\tcp*[f]{Case 4.2}){
                \ForEach{$M_t \in \mathcal{S}$}{
                    $R_{M_t}(u_i)\leftarrow \cap_{u_j\in N_-^O(u_i)}\mathcal{A}_{u_i}^{u_j}(M_t[u_j])$\; 
                    \Enumerate($Q$,$\mathcal{A}$, $O$, $\mathcal{M}$, $M_t \oplus R_{M_t}(u_i)$, $i+1$)\;
                }
            }
        }
    }
\end{algorithm}
\end{small}

\nosection{Case 3.} $u_i$ is a black vertex, and at least one of its backward neighbors is a white vertex.

As illustrated in \figref{fig:case3_naive}, $M$ is an aggregated embedding of $Q_5$ in \figref{fig:graph_example}, where the matching part $\{(u_0, v_0), (u_1, v_1), (u_2, v_2)\}$ is omitted for simplicity. The white vertex $u_3$ is mapped to three data vertices: $\{v_4, v_5, v_6\}$. When extending $u_5$, a naive solution may split $M$ into three separate embeddings for its white backward neighbor $u_3$ and perform the extension as in Case 1. However, this leads to redundant computations. By preserving aggregation instead, we can merge the resulting embeddings $M_1, M_3, M_5$ into a single $M'$, thereby facilitating more sharing opportunities for future extensions such as $u_6$.

To address this efficiently, we propose the solution illustrated in \figref{fig:case3_eg}.  
We first compute the set of extensible vertices $R_M(u_i)$, which will be detailed later.  
Then, for each vertex $v_i \in R_M(u_i)$, we prune the candidate vertices $v_j$ in the aggregated mapping $M[u_j]$ of each white backward neighbor $u_j$ of $u_i$ (lines 20-21). Specifically, if $v_i$ is not adjacent to $v_j$, $v_j$ is filtered out.

If the mapping set of any white backward neighbor becomes empty after pruning, $v_i$ is removed from $R_M(u_i)$. Otherwise, $v_i$ is appended to the shrunk aggregated embedding to form a valid extension (lines 22-23).  

\setlength{\textfloatsep}{\origtextfloatsep}

We now return to computing the extensible vertex set $R_M(u_i)$, which depends on whether $u_i$ has black backward neighbors:
\begin{itemize}[leftmargin=10pt]
    \item \textbf{If $u_i$ has at least one black backward neighbor} (lines 12-13), the extensible vertices are obtained by intersecting the adjacency sets of all such neighbors:  
    $R_M(u_i) = \bigcap_{u_j \in BK(u_i)} \mathcal{A}_{u_i}^{u_j}(M[u_j])$.
    \item \textbf{If $u_i$ has no black backward neighbors} (lines 14-16), 
    we select the white neighbor $u_j \in WT(u_i)$ with the minimum $|M[u_j]|$ and compute the extensible vertices by taking the union of adjacency sets: $R_M(u_i) = \bigcup_{v_j \in M[u_j]} \mathcal{A}_{u_i}^{u_j}(v_j)$.
\end{itemize}

\begin{figure}[htbp]
    \vspace{-0.2cm}
    \centering
    \setlength{\abovecaptionskip}{0.2cm}

    \begin{subfigure}[b]{0.85\linewidth}
        \centering
        \includegraphics[width=\linewidth]{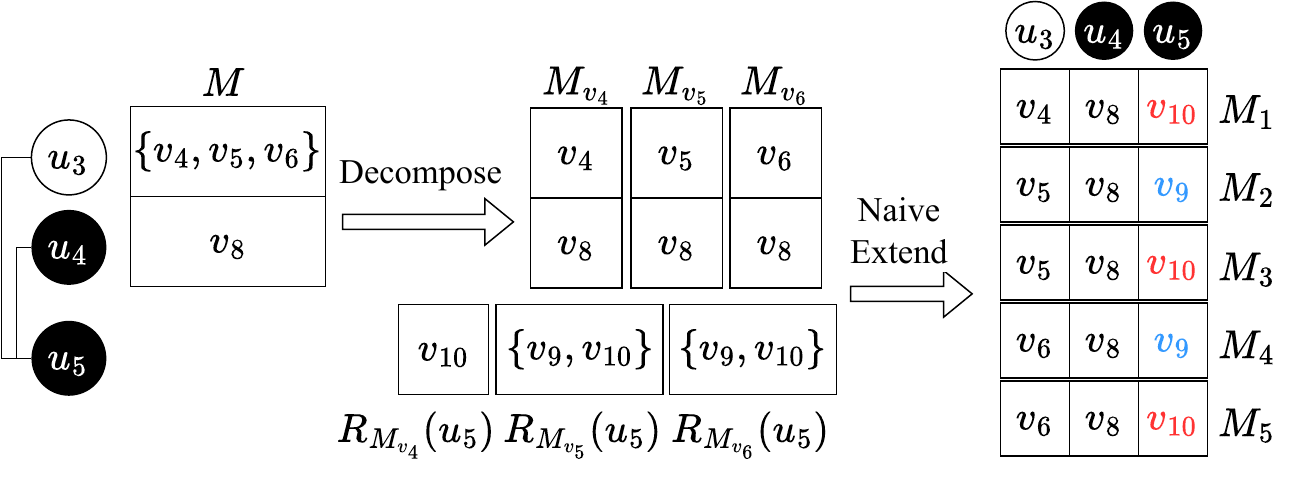}
        \captionsetup{skip=1.6pt}
        \caption{Naive solution Case 3}
        \label{fig:case3_naive}
    \end{subfigure}
    
    \vspace{0.1cm} 
    
    \begin{subfigure}[b]{0.7\linewidth}
        \centering
        \includegraphics[width=\linewidth]{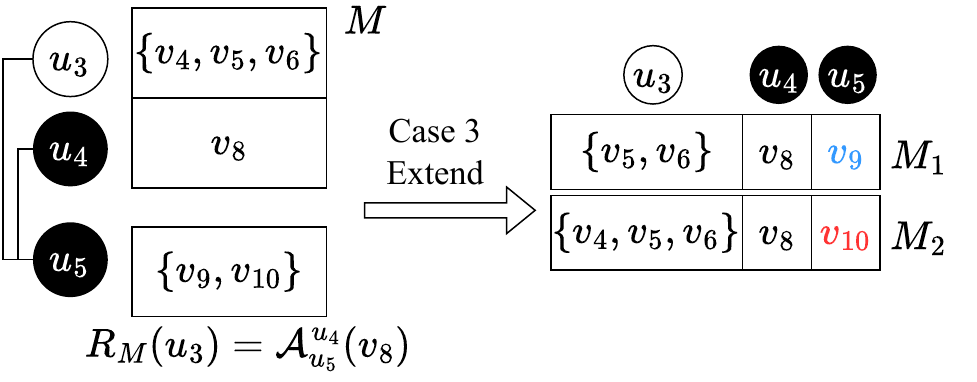}
        \captionsetup{skip=1.6pt}
        \caption{Our solution for Case 3}
        \label{fig:case3_eg}
    \end{subfigure}
    
    \caption{Illustrations of Case 3.}
    \label{fig:case3}
\end{figure}

\nosection{Case 4.} $u_i$ is a white vertex, and at least one of its backward neighbors is a white vertex.

We first compute the extensible vertices $R_M(u_i)$ in the same manner as in Case 3. The key challenge in this case lies in how to handle the aggregated embeddings involving $u_i$ and its white backward neighbors. Since both $u_i$ and some of its backward neighbors are white, we consider two alternative strategies.

One is to retain the aggregated embeddings of $u_i$’s white backward neighbors and treat $u_i$ as a black vertex, i.e., process each mapping in $R_M(u_i)$ individually. The other is to break the existing aggregated embeddings into finer-grained ones, and merge $u_i$’s mappings into the aggregation to form new composite embeddings.

In our design, the choice between these two strategies is adaptive depending on which one leads to less redundant computation. Specifically, we divide Case 4 into two subcases:

\underline{\emph{Case 4.1}} (lines 26-27). If the size of the Cartesian product of the mapping sets of $u_i$’s white backward neighbors is not less than $|R_M(u_i)|$, it is more efficient to process $u_i$ separately. Thus, we retain the aggregated embeddings of its white backward neighbors and extend each $v_i \in R_M(u_i)$ independently. This is essentially the same strategy as in Case 3.

\underline{\emph{Case 4.2}} (lines 28-31). If the size of the Cartesian product of the mapping sets of $u_i$’s white backward neighbors is smaller than $|R_M(u_i)|$, merging the mappings of $u_i$ into the aggregation may reduce redundancy. In this case, we enumerate all embeddings in the Cartesian product set $\mathcal{S}$, and for each embedding $M_t \in \mathcal{S}$, we extend it with $u_i$ following the procedure used in Case 2.

\figref{fig:case4_eg} illustrates an example of this extension strategy. Suppose the white vertex $u_i$ has two white backward neighbors, $u_j$ and $u_k$. If the product of their candidate set sizes satisfies $|R[u_j]| \times |R[u_k]| \geq |R_M(u_i)|$, we apply the same strategy as in Case 3 (upper part of \figref{fig:case4_eg}) to directly extend $u_i$. Otherwise, we decompose the candidate sets of $R[u_j]$ and $R[u_k]$, and keep $u_i$'s candidates in an aggregated form, as in Case 2 (lower part of \figref{fig:case4_eg}).

\begin{figure}
    \setlength{\abovecaptionskip}{0.23cm}
    \centering
    \includegraphics[width=0.8\linewidth]{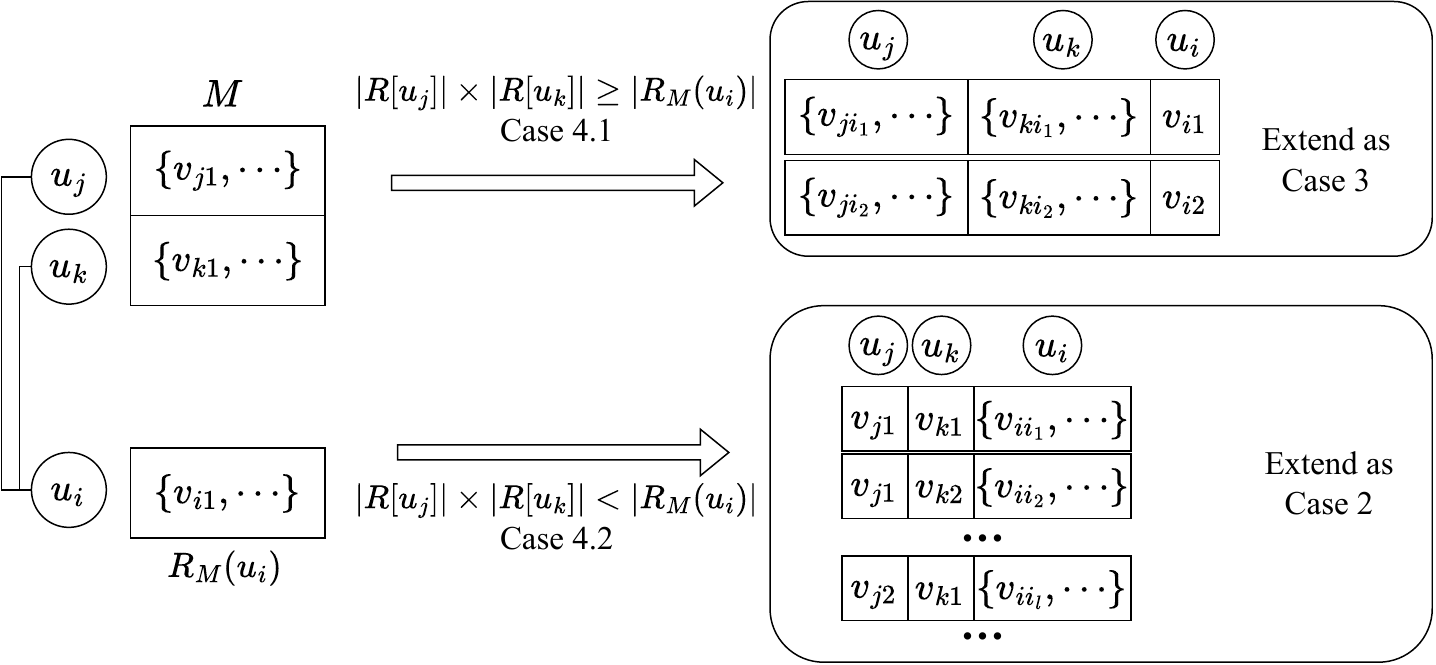}
    \caption{Illustration of Case 4.}
    \label{fig:case4_eg}
    \vspace{-0.5cm}
\end{figure}

\textbf{Discussions.} 
Compared with existing methods \cite{jin2023circinus, yang2023fast, lu2025b}, our approach does not rely on a vertex cover of the query graph and can operate with any black-white encoding of the query graph, offering greater flexibility. In particular, these methods do not handle Case 4, which may miss opportunities for merging common extensions.

\subsection{Dealing with Vertex Conflicts}
\label{sec:vertex_conflicts}

One key challenge in subgraph isomorphism is enforcing the vertex injectivity constraint, which requires that each query vertex must be mapped to a unique data vertex. 
We refer to any violation of this constraint as a \emph{vertex conflict}. 
Some existing methods, such as BSX \cite{lu2025b}, postpone conflict checking until the last level of enumeration. However, this may miss early pruning opportunities, as our experimental results demonstrate (\secref{sec:total_time_comparison}).

Due to the existence of white vertices in our enumeration framework, our approach to detecting vertex conflicts differs slightly, and is not explicitly shown in \algref{algo:hybrid_enumerate}. 
In our solution, we only record the mappings of black vertices, as well as the mappings of white vertices when they are mapped to a single data vertex. 
This includes two situations: (i) $u_i$ is deterministically mapped when extended under Case 3 or Case 4; and (ii) under Case 4, a white backward neighbor $u_k$ of $u_i$ becomes deterministically mapped when its candidate set is reduced to a single vertex.
Such deterministically mapped vertices are included in conflict checking to preserve correctness and maximize pruning opportunities.

At the final level of enumeration, we obtain an aggregated embedding $M$ in which every query vertex has been assigned a mapping. Since each white vertex may map to multiple candidate data vertices, we generate all possible full embeddings by taking the Cartesian product of their mappings.
Each resulting embedding is then checked against the vertex injectivity constraint, and only valid ones are added to the result set $\mathcal{M}$ (lines 1-2 in \algref{algo:hybrid_enumerate}).

\section{Common Extension Reusing}
\label{sec:reuse}

\texttt{CEM} can share the extension computation in an aggregated embedding, but it still misses other optimization opportunities, as illustrated in the following example. 

\begin{example}\label{example:cer} In \figref{fig:parent_vertex}, we extend $u_4$ in the query graph given in Figure \ref{fig:graph_example}. Because $u_4$'s backward neighbors are $u_0$ and $u_1$, \lemref{lemma:redundant} tells us that the extension of $u_4$ only depends on $u_0$ and $u_1$, while $u_2$ and $u_3$ are \emph{irrelevant}. Using \texttt{CEM}, assume that we have generated two partial embeddings $M_1$ and $M_2$ for subquery $Q_3$, which have the same mappings of $u_0$ and $u_1$, but different in $u_2$ and $u_3$. When extending $u_4$, we need to extend $M_1$ and $M_2$, respectively. However, the extension of $u_4$ is the same for both $M_1$ and $M_2$. Thus, a desirable solution should avoid such duplicated extension, which cannot be achieved by \texttt{CEM}. 
\end{example}

To address the above problem, we propose a \emph{backward-looking} \underline{c}ommon \underline{e}xtension \underline{r}eusing technique, called \texttt{CER}, which uses the historical results already generated to avoid duplicated extension.

\begin{figure}[htbp]
	\centering
	\includegraphics[width=0.78\linewidth]{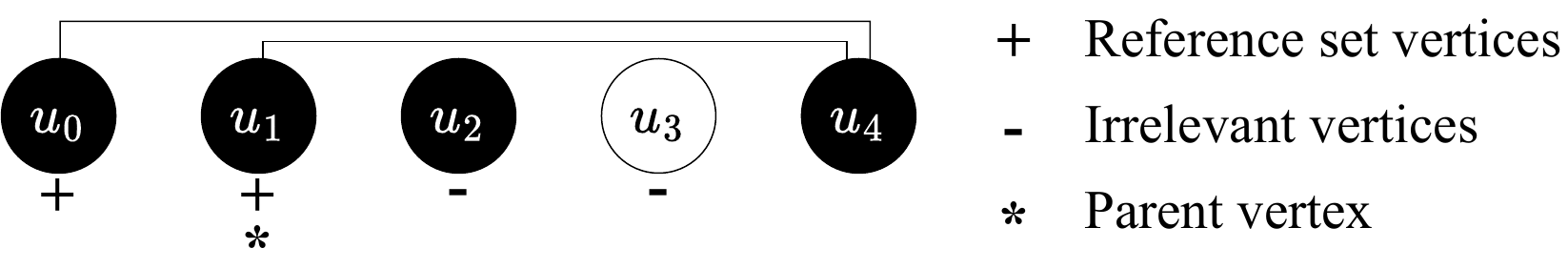}
	\caption{The reference set and parent vertex of $u_4$}
	\label{fig:parent_vertex}
    \vspace{-0.4cm}
\end{figure}

\subsection{Reference Set \& Brother Embeddings}

We first give the definitions of \emph{reference set} and \emph{brother embeddings}, which are main concepts to introduce \texttt{CER}. The idea behind them is based on a relaxed version of Lemma \ref{lemma:redundant}.

\begin{definition}[Reference Set]
\label{def:reference_set} 
Assume that the next query vertex to be extended is $u_i$. The \emph{reference set} of $u_i$ is a subset of $\{u_0, \cdots, u_{i-1}\}$, denoted as $RS(u_i)$, includes two components: 
\begin{itemize}[leftmargin=10pt]
    \item The closure of $u_i$'s backward neighbors, denoted as $Anc(u_i)$, which contains $N_-^O(u_i)$ and all their backward neighbors recursively;
    \item All vertices $u_k$ (with $k<i$) that are adjacent to at least one white backward neighbor of $u_i$.
\end{itemize}
Formally, the reference set $RS(u_i)$ is defined as:
\begin{equation}
\resizebox{0.926\hsize}{!}{$
RS(u_i) = Anc(u_i) \cup \{ u_k \mid\exists u_j \in WT(u_i),\ e(u_j, u_k) \in E(Q),\ k < i\}.
$}
\end{equation}
\end{definition}

Intuitively, the reference set $RS(u_i)$ contains the query vertices whose mappings affect the extensible set $R_M(u_i)$. Besides the direct backward neighbors of $u_i$, it also includes vertices $u_k$ ($k < i$) that are adjacent to any white backward neighbor $u_j \in WT(u_i)$. This is because $u_k$ is extended using Case 3 or Case 4, which could prune the mappings of $u_j$, and thereby influence the extension of $u_i$.

\begin{definition}[Brother Embeddings]
\label{def:brother_embeddings}
Assume that $M_1$ and $M_2$ are two partial embeddings of the subquery $Q_i$, and the next query vertex to be extended is $u_i$. 
We say that $M_1$ and $M_2$ are \emph{brother embeddings} if they assign the same mapping to every vertex in the reference set of $u_i$, i.e., $M_1(u) = M_2(u)$ for all $u \in RS(u_i)$.
\end{definition}

\begin{example}
Recall \exgref{ex:motivation2}, $M_1$ and $M_2$ are two partial embeddings of the subquery $Q_4$. 
As shown in \figref{fig:parent_vertex}, the next query vertex $u_4$ is adjacent to $u_0$ and $u_1$, both of which are black vertices. 
Therefore, the reference set of $u_4$ is $RS(u_4) = \{u_0, u_1\}$. In this case, $M_1$ and $M_2$ assign the same mappings to the vertices in $RS(u_4)$, i.e., $M_1[u_0] = M_2[u_0] = v_0$ and $M_1[u_1] = M_2[u_1] = v_1$. Thus, $M_1$ and $M_2$ are brother embeddings.
\end{example}

\subsection{CER Strategy for Extension Reusing}
\label{sec:cer_for_reusing}

We now propose the \texttt{CER} strategy. 
Given a query graph $Q$, for each query vertex $u_i$, we define its \emph{parent vertex} as the vertex $u_k \in RS(u_i)$ with the largest index in the matching order $O$ (e.g., in \figref{fig:parent_vertex}, $u_1$ is the parent vertex of $u_4$). Correspondingly, $u_i$ is referred to as a \emph{child vertex} of $u_k$.

If $u_k$ is not the immediate predecessor of $u_i$ in $O$, i.e., $k < i - 1$, we set a flag $u_i.f = \texttt{true}$, record its parent as $u_i.p = u_k$, and register $u_i$ as a child of $u_k$ via $u_k.child \leftarrow u_k.child \cup \{u_i\}$. 
Otherwise, we set $u_i.f = \texttt{false}$. 
The \texttt{CER} strategy is applied only to vertices $u_i$ with $u_i.f = \texttt{true}$. 
We design a buffer structure, called the \emph{Common Extension Buffer} (CEB), to cache reusable extension results.

\begin{definition}[Common Extension Buffer (CEB)]
\label{def:ceb}
Given a query vertex $u_i$ with $u_i.f = \texttt{true}$, its CEB consists of two components:
$CEB(u_i).g$, a boolean flag indicating whether the buffer is valid (initialized to \texttt{false});
and $CEB(u_i).b$, which stores reusable local extensions of $u_i$.
\end{definition}

We now describe how to use the CEB during the \texttt{CER} procedure. 
Suppose we are processing a partial embedding $M$ and the next query vertex to be extended is $u_i$, where $u_i.f = \texttt{true}$.

\begin{enumerate}[leftmargin=14pt]
    \item When $CEB(u_i).g$ is \texttt{false}:  
    This indicates the \emph{first time} $u_i$ is extended among $M$'s brother embeddings.  
    We update $CEB(u_i).b$ with the current extensions and set $CEB(u_i).g = \texttt{true}$ after processing.  
    The contents pushed into the buffer depend on the extension strategy, as discussed in \secref{sec:merge}:
    \begin{itemize}[leftmargin=10pt]
        \item Case 1 \& 2: Store $R_M(u_i)$ directly into $CEB(u_i).b$.
        \item Case 3: Store $\{M_v \oplus v \mid v \in R_M(u_i)\}$, where $M_v$ is the shrunk embedding derived from $M$ associated with $v$ (see lines 19-21 in \algref{algo:hybrid_enumerate}).
        \item Case 4: 
        For Case 4.1, store $\{M_v \oplus v \mid v \in R_M(u_i)\}$ as in Case 3;  
        For Case 4.2, store $\{M_t \oplus R_{M_t}(u_i) \mid M_t \in \mathcal{S}\}$, where $\mathcal{S}$ is the Cartesian product of decomposed candidates, and $R_{M_t}(u_i)$ is the extensible set of $u_i$ under $M_t$.
    \end{itemize}

    \item When $CEB(u_i).g$ is \texttt{true}: This means that valid extensions have already been cached. We reuse $CEB(u_i).b$ to directly extend $M$ based on $u_i$'s different extension cases, avoiding redundant computation.
    \item When backtracking from the level of $u_i$ in the search tree, we reset the CEB flags of all child vertices of $u_i$, i.e., we set $CEB(u_j).g = \texttt{false}$ for every $u_j \in u_i.child$.
\end{enumerate}

\begin{figure}[htbp]
	\centering
	\includegraphics[width=0.57\linewidth]{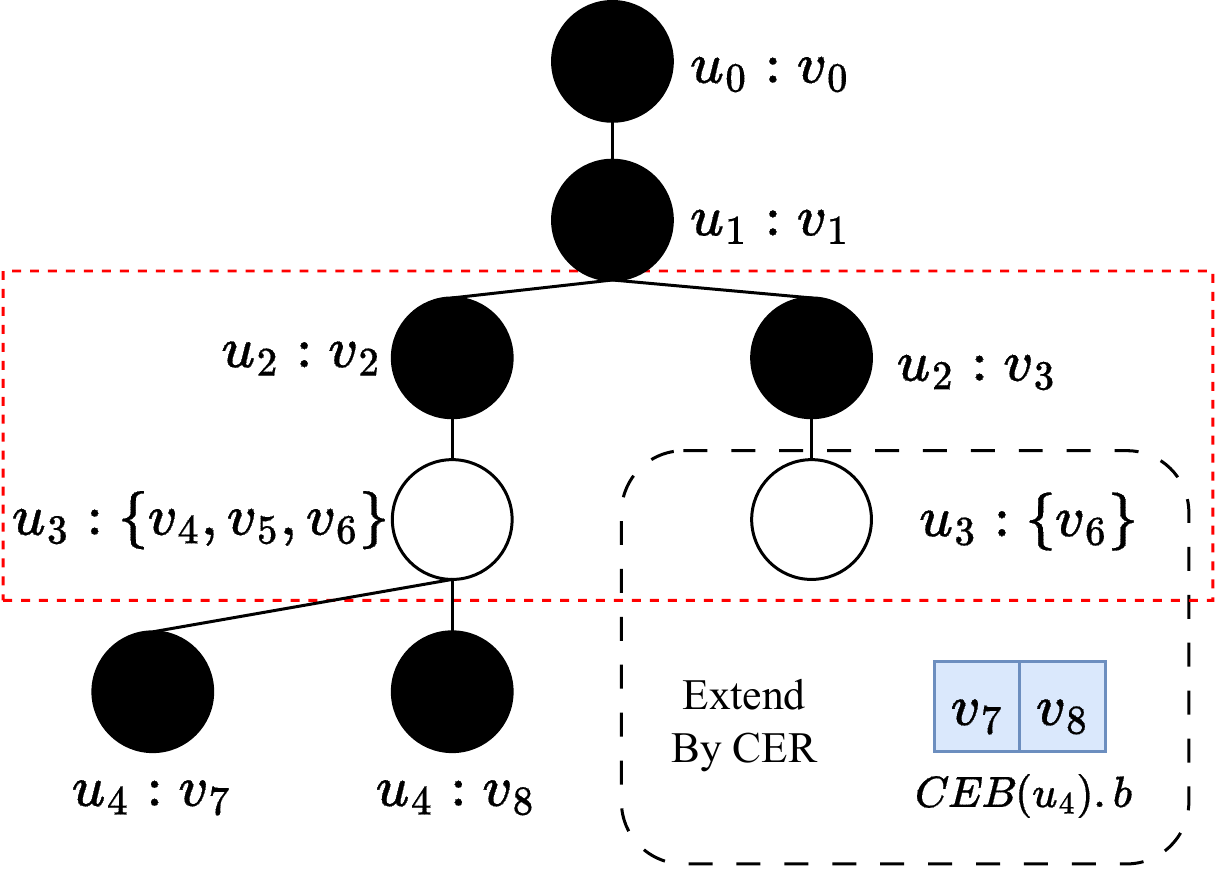}
	\caption{Using CEBs to eliminate duplicated computation}
    \label{fig:matching_tree}
\end{figure}

\begin{example}
Referring to the query and data graphs in \figref{fig:graph_example}, the search tree from $u_0$ to $u_4$ rooted at $v_0$ is illustrated in \figref{fig:matching_tree}. 
Assume that the enumeration has just backtracked to the partial embedding $M = \{(u_0, v_0), (u_1, v_1), (u_2, v_2), (u_3, \{v_4, v_5, v_6\})\}$ after the mappings $\{v_7, v_8\}$ for $u_4$ are cached in $CEB(u_4).b$. Since $u_1$ is the parent of $u_4$, all partial embeddings within the red dashed box represent the brother embeddings of $M$. Consequently, the embedding $\{(u_0, v_0), (u_1, v_1), (u_2, v_3), (u_3, \{v_6\})\}$ can be extended directly by utilizing the $CEB(u_4).b$. This $CEB$ of $u_4$ remains valid until the mapping of $u_1$ changes from $v_1$.
\end{example}

\textbf{Remark.} 
We enable \texttt{CER} only for query vertices $u_i$ whose parent vertex is not $u_{i-1}$. 
The reason is that if $u_i$'s parent is exactly $u_{i-1}$, then each time we backtrack to the depth of $u_i$, the mappings of vertices in $RS(u_i)$ may have changed due to the rematching of $u_{i-1}$. 
As a result, the previous extensible vertices of $u_i$ are no longer valid, making extension reuse ineffective.

\section{Optimizations and Extensions}
\label{sec:opt}

This section presents several key optimizations in \method, including two effective pruning techniques, matching order selection, and an encoding strategy. We further describe how \method can be extended to support directed and edge-labeled graphs.

\subsection{Pruning Techniques}
To effectively identify unpromising search branches that cannot produce any valid full embedding during enumeration, we additionally propose two pruning techniques.

\subsubsection{Contained Vertex Pruning}
\label{sec:contained_vertex_pruning}
\begin{definition}[Contained Vertex Set]
Suppose the matching order is $O=(u_0, \cdots, u_{n-1})$, and $u_i$ and $u_j$ is two vertices with the same label, and $i<j$. If the backward neighbor set of $u_i$ is $u_j$'s subset, i.e., $N_-^O(u_i)\subseteq N_-^O(u_j)$, then we say that $u_j$ is contained by $u_i$ under the matching order $O$. The set of query vertices that contained by $u_i$ under the matching order is called \textit{contained vertex set} of $u_i$, denoted by $Con(u_i)$.
\end{definition}

Then we have the following pruning rule,
\ifbool{fullversion}{whose proof can be found in Appendix \ref{proof:pruning}.}{\red{whose proof can be found in the appendix of the full paper \cite{codes}.}}

\begin{lemma}[Contained Vertex Pruning]
\label{lem:contained_vertex_pruning}
During the enumeration process of extending $u_i$ under the partial embedding $M$, if $|R_M(u_i)| < |Con(u_i)|$, then the search branch rooted at $M$ can be safely pruned.
\end{lemma}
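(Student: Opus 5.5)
The plan is a pigeonhole argument on the vertices of $Con(u_i)$, carried out on individual full embeddings. If the branch rooted at $M$ were not prunable, some full embedding $M^*$ extending $M$ would exist. All of the vertices in $Con(u_i)$ would then have to be mapped injectively into the set $R_M(u_i)$, which forces $|Con(u_i)| \le |R_M(u_i)|$. We take $Con(u_i)$ to contain $u_i$ itself; this is the reading under which the inequality is meaningful, since $u_i$ trivially satisfies $N_-^O(u_i)\subseteq N_-^O(u_i)$.

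First I fix the setting. We are extending $u_i$ under $M$, so every backward neighbor of $u_i$ is already mapped, and $R_M(u_i)$ is the set of candidate data vertices $v$ with three properties: the right label, membership in $C(u_i)$, and adjacency in $\mathcal{A}$ to $M[u']$ for every $u'\in N_-^O(u_i)$. If a backward neighbor is white, $M[u']$ is a set, and I read $R_M(u_i)$ as the union over the individual embeddings represented by $M$. The argument is then run separately for each individual (non-aggregated) embedding $M'$ represented by $M$, using $R_{M'}(u_i)\subseteq R_M(u_i)$.

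Next, take any $u_j\in Con(u_i)$ and any full embedding $M^*$ extending $M'$. Since $L(u_j)=L(u_i)$ and $N_-^O(u_i)\subseteq N_-^O(u_j)$, the vertex $M^*[u_j]$ is adjacent to $M^*[u']=M'[u']$ for every $u'\in N_-^O(u_i)$. We also need $M^*[u_j]\in C(u_i)$. This follows because label and neighborhood-based filters are monotone in the relevant constraints, or more simply by viewing $R$ as defined through $\mathcal{A}$ via the edges to $N_-^O(u_i)$ only, which holds whenever the filter for $u_i$ checks only those edges. Hence $M^*[u_j]\in R_{M'}(u_i)$.

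Finally, injectivity of $M^*$ gives $|Con(u_i)|$ distinct images all lying in $R_{M'}(u_i)\subseteq R_M(u_i)$, so $|R_M(u_i)|\ge |Con(u_i)|$. The contrapositive says that if $|R_M(u_i)|<|Con(u_i)|$, no full embedding extends any embedding in $M$, and the branch can be safely pruned.

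The main obstacle is the membership $M^*[u_j]\in C(u_i)$. The candidate set $C(u_j)$ is filtered with respect to $u_j$'s own neighborhood, not $u_i$'s, so it is not automatically contained in $C(u_i)$. To close this gap I would argue one of two ways. One option is to show that $u_i$'s filtering constraints are implied by $u_j$'s, since $u_j$ has a superset of backward neighbors with the same label. The other is to define $R_M(u_i)$, for the purposes of this lemma, using only the adjacency constraints to $N_-^O(u_i)$, with $\mathcal{A}^{u'}_{u_i}$ storing all label-consistent neighbors. If neither is available for the specific filter, the fallback is to state the lemma relative to the unfiltered extensible set computed from the data graph adjacency.
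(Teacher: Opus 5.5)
\textbf{Agreement with the paper.} Your skeleton matches the paper's proof. It shows, via a chain of inclusions, that every extensible vertex of every $u_j\in Con(u_i)$ lies in $R_M(u_i)$. It then concludes by pigeonhole that either some $u_j$ gets an empty extensible set or two vertices of $Con(u_i)$ collide. Your reduction to individual embeddings $M'$ with $R_{M'}(u_i)\subseteq R_M(u_i)$ is valid for the algorithm's $R_M(u_i)$ in Cases 3 and 4. There $R_M(u_i)$ is an intersection over $BK(u_i)$ only, or a union over one white neighbor, and both are supersets of each $R_{M'}(u_i)$. This is more careful than the paper, which never addresses white backward neighbors. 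You do not need to put $u_i$ into $Con(u_i)$. The definition requires $i<j$, the literal statement is meaningful (just not tight), and your injectivity count works under either reading.

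\textbf{The gap.} The genuine gap is the step $M^*[u_j]\in C(u_i)$, and your first justification for it (filters being ``monotone in the relevant constraints'') is false. The containment $N_-^O(u_i)\subseteq N_-^O(u_j)$ says nothing about forward neighbors, whereas $C(u_i)$ is filtered using all of $u_i$'s neighbors. Here is a concrete counterexample.
\begin{itemize}
\item Query: $u_a$ (label $A$) is adjacent to $u_i,u_j,u_k$ (label $B$), and $u_i$ has two further neighbors $w_1,w_2$ (label $C$). Take the order $(u_a,u_i,w_1,w_2,u_j,u_k)$, so $Con(u_i)=\{u_j,u_k\}$.
\item Data graph $G$: $a$ (label $A$) is adjacent to $x,y,z$ (label $B$), and $x$ is adjacent to $c_1,c_2$ (label $C$).
\item LDF alone already gives $C(u_i)=\{x\}$. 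So under $M=\{(u_a,a)\}$ we get $|R_M(u_i)|=1<2=|Con(u_i)|$ and the branch is pruned.
\item Yet $(a,x,c_1,c_2,y,z)$ is a valid embedding, with $y\notin C(u_i)$.
\end{itemize}
Hence your option (a) cannot be carried out. The lemma holds only under an additional hypothesis such as $C(u_j)\subseteq C(u_i)$ for all $u_j\in Con(u_i)$. Your options (b) and (c) amount to this, at the price of testing against a larger set than the algorithm's $R_M(u_i)$. A hypothesis-free sound alternative is to compare $|Con(u_i)|$ with the size of $\bigcup_{u_j\in Con(u_i)}\bigcap_{u\in N_-^O(u_i)}\mathcal{A}^{u}_{u_j}(M[u])$; your injectivity argument proves that version directly.

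\textbf{The paper has the same hole.} The second inclusion in the paper's chain silently replaces $\mathcal{A}^{u}_{u_j}(M[u])$ by $\mathcal{A}^{u}_{u_i}(M[u])$, which is exactly the unproved containment of candidate sets. With that hypothesis made explicit, your proof is complete.
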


\subsubsection{Extended Failing Set Pruning}
\label{sec:failing_set_pruning}

The failing set is an effective backjumping-based pruning technique, originally proposed in DAF \cite{han2019efficient}, and has been adopted by subsequent methods \cite{sun2020rapidmatch, kim2021versatile, choi2023bice}.

\begin{definition}[Failing Set]
Let $M$ be a partial embedding. A subset of query vertices $F_M$ is called a \emph{failing set} of $M$ if no valid full embedding can be extended from the restriction of $M$ to $F_M$, denoted as $M[F_M]$.
\end{definition}

\begin{lemma}[Failing Set Pruning]
Suppose $M$ is a partial embedding where the last matching is $(u_i, M[u_i])$, and $F_M$ is a non-empty failing set of $M$. If $u_i \notin F_M$, then all sibling branches of $M$ in the search tree can be safely pruned.
\end{lemma}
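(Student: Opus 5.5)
The plan is to argue by contrapositive: if $M$ has \emph{any} extension to a valid full embedding, then every sibling of $M$ also has one. Here the siblings of $M$ are the embeddings $M' = M_{i-1}\oplus v'$ obtained from the common parent $M_{i-1}$ (the restriction of $M$ to $\{u_0,\dots,u_{i-1}\}$) by changing the mapping of $u_i$. More precisely, we show that any full embedding extending a sibling $M'$ would contradict the failing-set property of $F_M$. Since $F_M \subseteq V(M) = \{u_0,\dots,u_i\}$ (the failing set only restricts $M$) and $u_i\notin F_M$, we have $F_M\subseteq\{u_0,\dots,u_{i-1}\}$.

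First, every sibling $M'$ agrees with $M$ on all vertices other than $u_i$, because both extend the same parent $M_{i-1}$. Hence $M'[F_M] = M[F_M]$ as restricted mappings.

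Next, suppose some sibling $M'$ extends to a valid full embedding $M^*$. Then $M^*$ restricted to $F_M$ equals $M'[F_M] = M[F_M]$. So $M^*$ is a valid full embedding extending $M[F_M]$. This contradicts the definition of a failing set, which says that no valid full embedding extends $M[F_M]$. Therefore no sibling branch contains a valid full embedding and all siblings can be pruned. This also covers any remaining extensions of $M$ itself, if that is what ``sibling'' includes, and no enumeration results are lost.

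Finally, I will address the aggregated-embedding setting of \algref{algo:hybrid_enumerate}. If $u_i$ or some vertex of $F_M$ is white, the ``mapping'' is a set, and siblings arise from different $v\in R_M(u_i)$ in Cases 1, 3 and 4.1, or from different $M_t\in\mathcal{S}$ in Case 4.2. In that setting I would interpret $M[F_M]$ as the (possibly aggregated) restriction and the failing condition as ``no full embedding represented by any expansion of it.'' The key point is that Case 3 and Case 4.1 siblings may have \emph{shrunk} white sets $M_v[u_j]\subseteq M[u_j]$, where the shrinking depends on $u_i$'s mapping. The restriction of a sibling to $F_M$ is then a sub-aggregation of the restriction of the parent, not necessarily of $M$'s. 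This is the main obstacle. I would handle it by defining the failing set relative to the unshrunk parent mapping, or by requiring that $F_M$ avoid white backward neighbors of $u_i$. Under either convention, the sets of concrete embeddings represented satisfy the needed containment, and the same contradiction goes through.
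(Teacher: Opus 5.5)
Your argument is correct and matches the reasoning the paper relies on. The paper gives no formal proof (the lemma is imported from DAF), only the remark that if none of the mappings in $F_M$ change, the subsequent search must fail. Your observation is the precise version of that remark: every sibling agrees with $M$ on $F_M$ once $u_i\notin F_M$, so any full embedding below a sibling would extend $M[F_M]$. (You do not need $F_M\subseteq\{u_0,\dots,u_i\}$, since only $F_M\cap\{u_0,\dots,u_i\}$ enters $M[F_M]$.)

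Your caveat for Cases 3, 4.1 and 4.2 reflects a real feature of the statement rather than a flaw in your proof. There, siblings differ from $M$ on $WT(u_i)$, so the literal claim needs $F_M\cap WT(u_i)=\emptyset$ or a parent-relative reading. The paper never states this in the lemma and addresses it only implicitly, by folding the shrinking vertices into the computed failing sets via $RS(\cdot)$ and $Tr(\cdot)$.
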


Intuitively, the failing set $F_M$ indicates those query vertices whose current mappings prevent the enumeration from producing any valid embeddings. If none of the mappings in $F_M$ are changed, subsequent search will inevitably fail. Building on this idea, we propose an \emph{extended failing set pruning} technique under the black-white vertex enumeration framework.
Specifically, the failing set of a partial embedding $M$ (where $u_i$ is the last mapped vertex in $M$) is computed as follows:
\begin{enumerate}[leftmargin=14pt]
    \item If $M$ is a leaf node in the search tree.
    \begin{itemize}[leftmargin=2pt]
        \item $M$ belongs to \textit{complete embedding class} if $M$ contains valid matchings. Then $F_M = \emptyset$.
        \item $M$ belongs to \textit{insufficient candidate class} if $|R_M(u_i)| < Con(u_i)$. Then $F_M = RS(u_i)$, corresponding to contained vertex pruning.
        \item $M$ belongs to \textit{vertex conflict class} if the mapping of $u_i$ conflicts with that of some $u_j$ where $j<i$. Then $F_M = Con(u_i)\cup Con(Tr(u_j))$, where $Tr(u_j)$ denotes the query vertex that restricts the candidate set of $u_j$ to a singleton. Specifically, if $u_j$ is a black vertex, or a white vertex with $|R_M(u_j)| = 1$, then we set $Tr(u_j) = u_j$. Otherwise, if $u_j$ is a white backward neighbor of some $u_k$ (where $k < i$), and $u_k$ is the first vertex in the matching order that reduces the mapping set of $u_j$ to a single vertex, then we set $Tr(u_j) = u_k$.
    \end{itemize}
    \item If $M$ is an internal node in the search tree, and $M_1, \dots, M_k$ are all its children.
    \begin{itemize}[leftmargin=2pt]
        \item First, if there exists a child $M_l$ such that $F_{M_l} = \emptyset$ (i.e., $M_l$ leads to a valid embedding), then $F_M = \emptyset$.
        \item Else if there exists some $M_l$ such that $u_i \notin F_{M_l}$, then $F_M = F_{M_l}$.
        \item Otherwise, $F_M = \bigcup_{l=1}^k F_{M_l}$.
    \end{itemize}
\end{enumerate}

\nosection{Remark.} Compared with the original failing set proposed in DAF \cite{han2019efficient}, our solution introduces two main extensions. First, we incorporate the contained vertex pruning rule, which enables earlier detection of failure. Second, our approach is compatible with the black-white vertex enumeration framework, where a batch of data vertices may be mapped to a single white query vertex.

\subsection{Matching Order Selection}
\label{sec:matching_ordering}
We decouple the matching order from the encoding strategy to reduce complexity. Our goal in matching order selection is to minimize the size of intermediate results during backtracking, thereby reducing the search space.

After the filtering phase, each query vertex $u$ is associated with a candidate set $C(u)$. We prioritize vertices with smaller candidate sizes and stronger connectivity to the current partial order. Specifically, the first vertex in the matching order is selected as:
\begin{equation}
u_0 = \arg\min_{u\in V(Q)} \frac{|C(u)|}{d(u)}.
\end{equation}
Given a partial matching order $O_i = (u_0, u_1, \cdots, u_{i-1})$, denote $N(O_i) = \cup_{u\in O_i}N(u)$, and the next vertex is chosen by:
\begin{equation}
u_i = \arg\min_{u\in N(O_i)\setminus O_i} \frac{|C(u)|}{|N(u)\cap O_i|}.
\end{equation}

\subsection{Encoding Strategy}
\label{sec:encoding_strategy}
Given a matching order $O$, \method searches for embeddings by considering the encoding types (black or white) of the current query vertex $u_i$ and its backward neighbors $N_-^O(u_i)$. These encoding choices influence both pruning strategies and the degree to which intermediate computations can be shared.
An effective encoding for a query vertex $u$ should balance the following factors:
\begin{itemize}[leftmargin=10pt]
\item A large number of forward neighbors or forward-neighbor candidates increases the case 3 and 4 situations, resulting in redundant splitting costs.
\item Encoding $u$ as white is less beneficial if it has many white backward neighbors, since fewer constraints are enforced. In contrast, more black backward neighbors help confirm the matching of $u$, but reduce the possibility of computation sharing.
\item Vertices with common labels are likely to cause conflicts during backtracking, and assigning them as white vertices would undermine the algorithm's ability to find such vertex conflict during intermediate search.
\item A large candidate size for $u$ suggests more potential for computation reuse among its candidates.
\end{itemize}

To jointly consider these factors, we propose a cost model that compares the white risk $WR(u)$ and the black risk $BR(u)$:
\begin{align}
WR(u) &= (1 + \sum_{u'\in N_+^O(u)} |C(u')|) \times |V(Q, L(u))| \times |WT(u)|; \\
BR(u) &= |C(u)| \times |BK(u)|,
\end{align}
where $|V(Q, L(u))|$ is the number of query vertices of label $u$. We encode $u$ as white if $WR(u) < BR(u)$, and as black otherwise.

\subsection{Extensions}
\label{sec:extensions}

Although \method is described under the undirected, vertex-labeled graph model, it can be readily extended to directed and edge-labeled graphs. Only two components require modification:
\begin{itemize}[leftmargin=10pt]
\item \textbf{Filtering stage.} Candidate edges must respect both the direction and the label of query edges during the filtering process.
\item \textbf{Definition of containing vertex.} During enumeration, edge directions must be considered for containing vertices. For two vertices $u_i$ and $u_j$ ($i < j$), if both the in- and out-backward neighbors of $u_i$ are subsets of those of $u_j$, then $u_j$ is said to be contained by $u_i$, enabling containing pruning.
\end{itemize}

\section{Experiments}
\label{sec:experiment}

\subsection{Experimental Setup}
\label{sec:exp_setup}

\subsubsection{Datasets}

We conduct our experiments on eight real-world datasets: Yeast, Human, HPRD, WordNet, DBLP, EU2005, YouTube, and Patents. These datasets are widely used in previous studies \cite{han2019efficient, sun2020rapidmatch, arai2023gup, kim2021versatile, choi2023bice} and surveys \cite{sun2020memory, zhang2024comprehensive}, and they span a variety of domains, including biology (Yeast, Human, HPRD), lexical semantics (WordNet), social networks (DBLP, YouTube), the web (EU2005), and citation networks (Patents). They differ in scale and complexity (e.g., topology and density), and their detailed statistics are summarized in \tabref{tab:dataset}.

\begin{small}
\begin{table}[t]
    \caption{Datasets statistics}
    \label{tab:dataset}
\begin{tabular}{ccccc}
    \toprule
        Dataset & $\lvert V \rvert$ & $\lvert E \rvert$ & $\lvert \Sigma \rvert$ & average degree \\
        \midrule
        Yeast & 3,112 & 12,519 & 71 & 8.0\\
        Human & 4,674 & 86,282 & 44 & 36.9\\
        HPRD & 9,460 & 34,998 & 307 & 7.4\\
        WordNet & 76,853 & 120,399 & 5 & 3.1 \\
        DBLP & 317,080 & 1,049,866 & 15 & 6.6\\
        EU2005 & 862,664 & 16,138,468 & 40 & 37.4\\
        YouTube & 1,134,890 & 2,987,624 & 25 & 5.3\\
        Patents & 3,774,768 & 16,518,947 & 20 & 8.8\\
    \bottomrule
\end{tabular}
\end{table}
\end{small}

\subsubsection{Queries}
\nop{Following prior work \cite{sun2020memory, han2019efficient, arai2023gup, zhang2024comprehensive}, we generate query graphs by performing random walks on data graphs and extracting the induced subgraphs. This ensures that every query graph have at least one valid embedding. For Human and WordNet (challenging due to topology denseness and label sparseness, respectively), we use query sizes $\{4, 8, 12, 16, 20\}$, and for the other six datasets, we use $\{4, 8, 16, 24, 32\}$. 
For each query size $n$, we denote the set as $\mathcal{Q}_n$ and further divide it into sparse ($\mathcal{Q}_{nS}$, average degree < 3) and dense ($\mathcal{Q}_{nD}$, otherwise) subsets when $n \geq 4$. Each dataset includes 200 queries for each of $\mathcal{Q}_{nS}$ and $\mathcal{Q}_{nD}$, and 200 queries for $\mathcal{Q}_4$, resulting in 1800 queries per dataset.}

We use 10,000 queries for each dataset in our experiments, covering a wide range of sizes and densities.
Following prior work \cite{sun2020memory, han2019efficient, arai2023gup, zhang2024comprehensive}, we generate query graphs by performing random walks on the data graphs and extracting the induced subgraphs. This procedure guarantees that every query graph has at least one valid embedding.
For Human and WordNet (challenging due to topology denseness and label sparseness, respectively), we use query sizes of $\{4, 8, 12, 16, 20\}$, and for the other six datasets, we use $\{4, 8, 16, 24, 32\}$.
For each query size $n$, we denote the corresponding query set as $\mathcal{Q}_n$, which contains 2,000 queries. Thus, each dataset comprises a total of $2000 \times 5 = 10000$ query graphs.
When $n > 4$, $\mathcal{Q}_n$ is further divided into a sparse subset $\mathcal{Q}_{nS}$ (average degree < 3) and a dense subset $\mathcal{Q}_{nD}$ (otherwise), i.e., $\mathcal{Q}_n = \mathcal{Q}_{nS} \cup \mathcal{Q}_{nD}$, with 1,000 queries in each subset.

\subsubsection{Compared Methods}
For performance comparison, we evaluate our \method algorithm with six state-of-the-art subgraph matching algorithms: DAF \cite{han2019efficient}, RM \cite{sun2020rapidmatch}, VEQ \cite{kim2021versatile}, GuP \cite{arai2023gup},  BICE \cite{choi2023bice} and BSX \cite{lu2025b}. 
All the source codes of the compared algorithms come from the original authors. 

\subsubsection{Experiment Environment}
Except for GuP, which is implemented in Rust, all other algorithms are implemented in C++.
We compile all C++ source code using g++ 13.1.0 with the -O2 flag. All experiments are conducted on an Ubuntu Linux server with an Intel Xeon Gold 6326 2.90 GHz CPU and 256 GB of RAM.
Each experiment is repeated three times, and the mean values are reported.

\subsubsection{Metrics}
\label{sec:exp_metrics}
To evaluate performance, we measure the elapsed time in milliseconds (ms) for processing each query, which consists of two components: preprocessing time and enumeration time. Specifically, the preprocessing time includes filtering and ordering steps (lines 1-2 of \algref{algo:generic_framework}), and for \method, it also includes the encoding step. The enumeration time refers to the duration of the backtracking search (line 4 of \algref{algo:generic_framework}). Except for the experiment in \secref{sec:limit_number_enumeration_time_study}, we terminate the algorithm after finding the first $10^6$ results as default.  To ensure queries complete within a reasonable time, we impose a timeout limit of 6 minutes; if a query exceeds this limit, its elapsed time is recorded as 6 minutes.

\subsection{Comparison with Existing Methods}
\label{sec:comp_existing_methods}

\subsubsection{Total Time Comparison}
\label{sec:total_time_comparison}

\figref{fig:total_time_comparison} presents the total query processing time of all compared methods.
For clarity, we separate the enumeration time (bottom bars with hatching) from the preprocessing time (top solid bars).
Except for HPRD, \method\ consistently outperforms all other methods, achieving a speedup of 1.39× to 9.80× over the second fastest method.
This improvement is mainly achieved in the enumeration stage, owing to the extension-reduction techniques of \method, and it becomes more pronounced for queries with large result sizes (see \secref{sec:limit_number_enumeration_time_study}).

HPRD is a relatively simple dataset with a large number of labels.
In such cases, most of the runtime is dominated by filtering, and DAF performs filtering more efficiently on HPRD, resulting in a shorter total time than \method.
However, as shown in \secref{sec:enumeration_time_comparison}, \method\ still surpasses DAF in enumeration on HPRD.

Some methods involve additional preprocessing steps.
For example, BICE clusters vertices that share the same neighbors after filtering, and GuP calculates guards for all candidates.
These extra steps can lead to poor total performance on certain datasets (e.g., BICE on WorNet, GuP on Patents).

\method also outperforms BSX in terms of efficiency in our experiments.
We attribute this performance gap to two main factors.
First, \method incorporates highly effective pruning strategies, such as the failing set pruning, which are absent in BSX.
Second, BSX only guarantees homomorphism during the batch search and postpones the injectivity check to the final enumeration stage.
Consequently, it incurs unnecessary overhead by processing many intermediate candidates that appear feasible but ultimately fail to satisfy the injectivity constraint.

\begin{figure}
    \setlength{\abovecaptionskip}{0.2cm}
    \centering
    \includegraphics[width=0.98\linewidth]{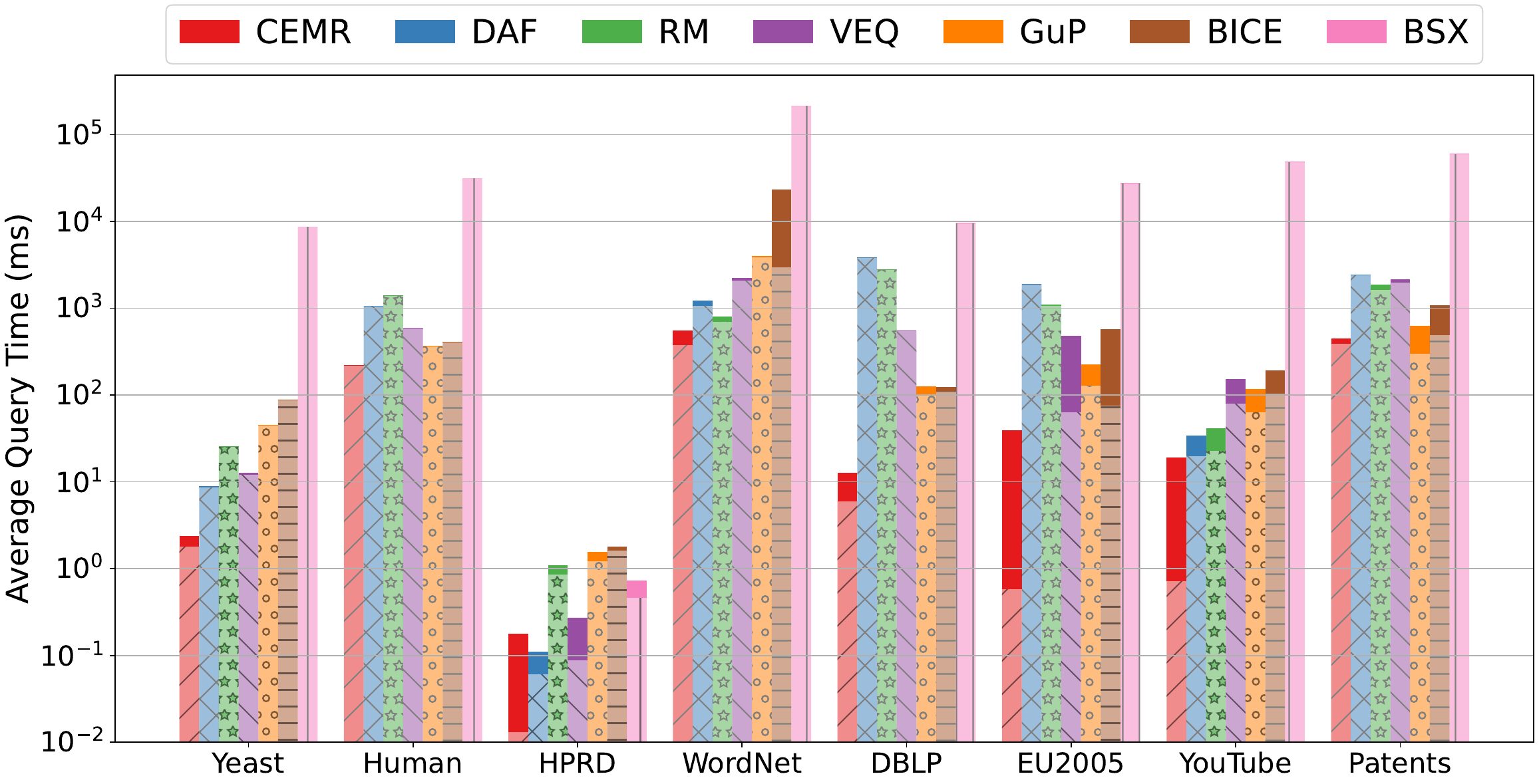}
    \caption{Average query time comparison. The bottom bars with hatching represent enumeration time, while the solid bars above represent preprocessing times.}
    \label{fig:total_time_comparison}
    \vspace{-0.15cm}
\end{figure}

\subsubsection{Enumeration Time Comparison}
\label{sec:enumeration_time_comparison}

Since \method is primarily designed to accelerate the enumeration phase, we conduct a detailed evaluation of its enumeration performance.
From the results in \figref{fig:total_time_comparison}, \method processes queries faster than other methods in most cases (with the exception of Patents, due to one additional timeout query, as discussed in \secref{sec:unsolved_ana_exp}). Compared with the second-best method, \method achieves a speedup ranging from 1.67x to 108.52x. As will be shown in \secref{sec:ablation_studies}, this advantage largely stems from the proposed \texttt{CEM} and \texttt{CER} techniques.

We further analyze enumeration time under different query sizes. The results in \figref{fig:enum_time_comparison_diff_size} show a consistent trend across datasets: enumeration time increases with query size for all methods, and \method generally outperforms its competitors.
A notable case is EU2005, where the correlation between query size and enumeration time does not strictly hold. In this dataset, larger query graphs, after filtering, almost always yield matches with large result sets. Consequently, \method can quickly generate embeddings and enumerate a substantial number of results by redundant extension elimination, leading to comparable or even shorter enumeration times for larger queries compared with smaller ones.

\begin{figure*}[!t]
    \centering
    \begin{subfigure}[c]{0.9\linewidth}
        \centering
        \includegraphics[width=\linewidth]{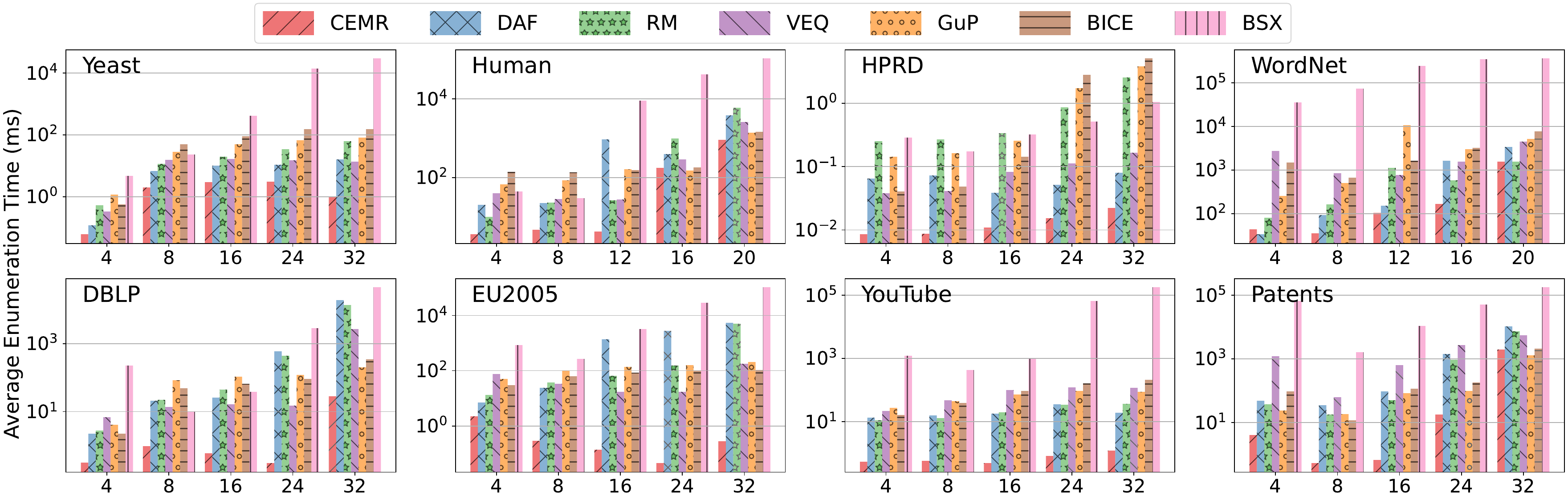}
        \captionsetup{skip=2pt}
        \caption{Enumeration time comparison across different query sizes.}
        \label{fig:enum_time_comparison_diff_size}
    \end{subfigure}
    \vspace{0.25cm}
    \begin{subfigure}[c]{0.9\linewidth}
        \centering
        \includegraphics[width=\linewidth]{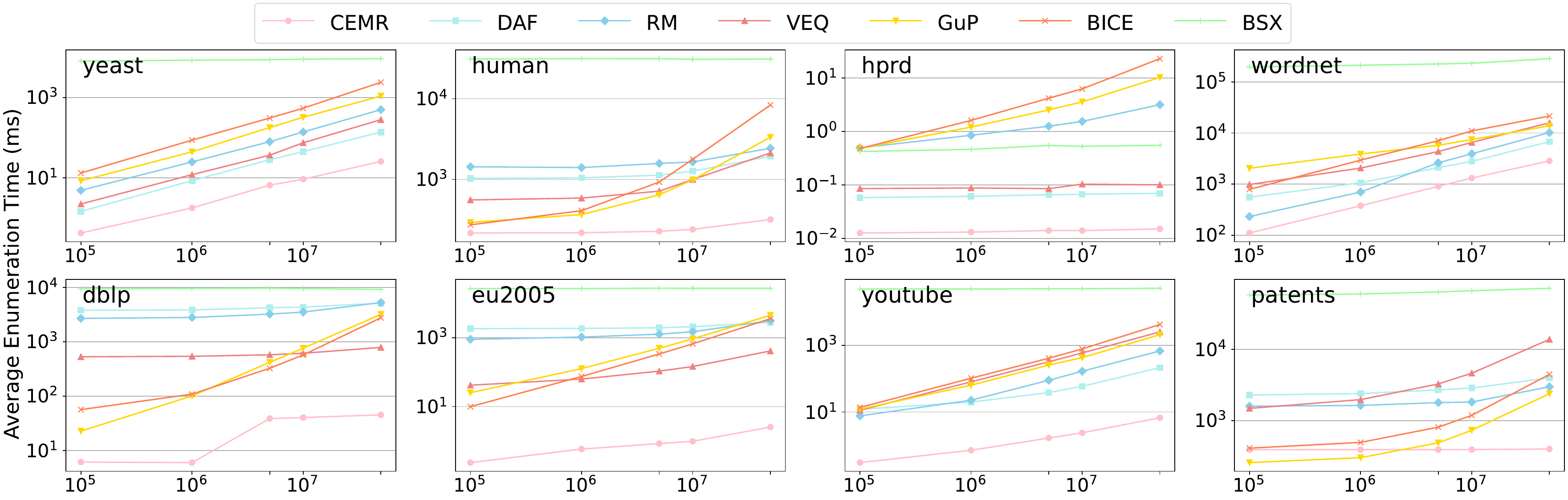}
        \captionsetup{skip=2pt}
        \caption{Enumeration time comparison under different result size limits.}
        \label{fig:enum_time_comparison_diff_ret}
    \end{subfigure}
    \vspace{-0.4cm}
    \caption{Detailed enumeration time comparison results. (a) Varying query sizes; (b) Varying result size limits.}
\end{figure*}

\begin{figure}
    \setlength{\abovecaptionskip}{0.1cm}
    \centering
    \begin{subfigure}[c]{0.45\linewidth}
        \centering
        \includegraphics[width=\linewidth]{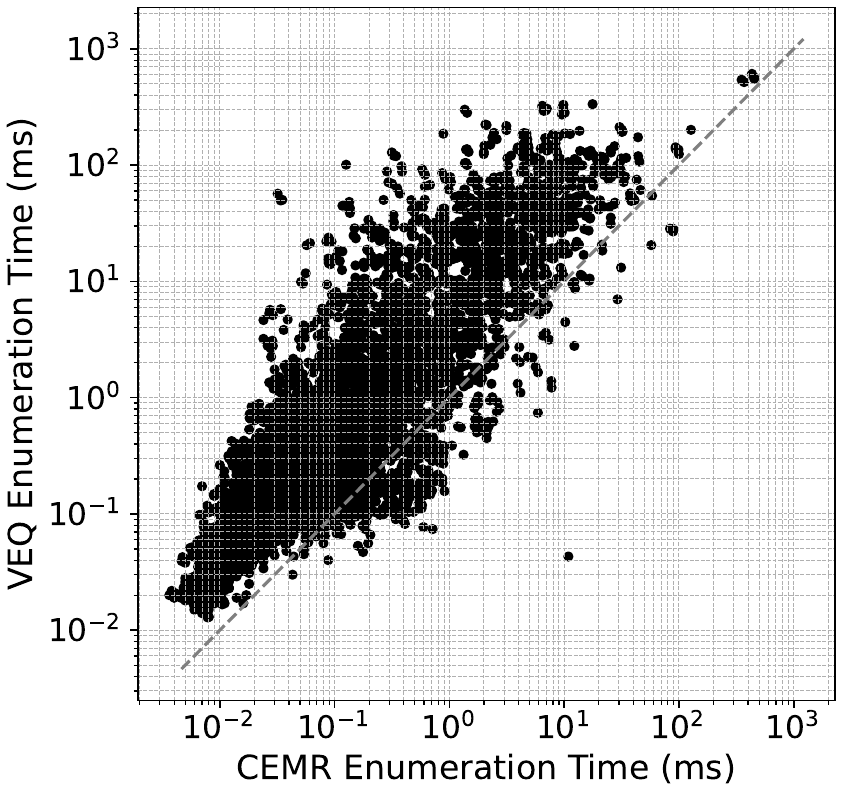}
        \caption{Yeast}
        \label{fig:yeast_scatter}
    \end{subfigure}
    \hspace{0.1cm}
    \begin{subfigure}[c]{0.45\linewidth}
        \centering
        \includegraphics[width=\linewidth]{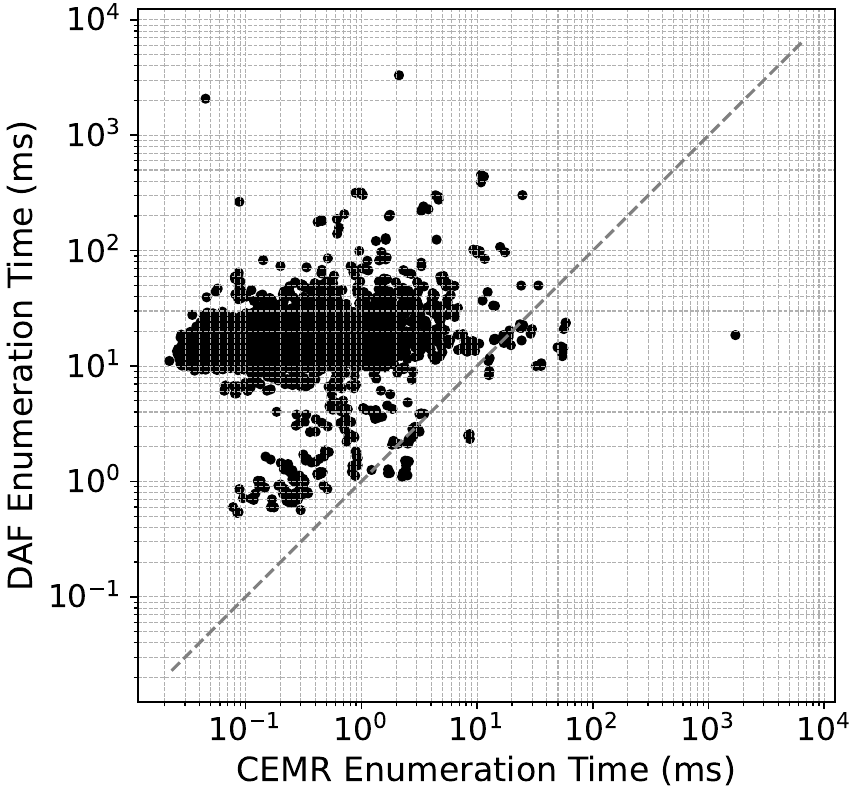}
        \caption{YouTube}
        \label{fig:youtube_scatter}
    \end{subfigure}
    \caption{Enumeration time comparison on Yeast and YouTube against VEQ and DAF, respectively.}
    \label{fig:yeast_youtube_enumeration}
    \vspace{-0.5cm}
\end{figure}

To better understand the performance of \method, we compare the enumeration time of each query with that of the second-best method based on the average enumeration time on the largest-size queries in the corresponding dataset.
The scatter plots in \figref{fig:yeast_youtube_enumeration} present the results for Yeast and YouTube, where the second-best methods are VEQ and DAF, respectively. 
In both cases, all queries are processed within the time limit by both methods. 
Most points lie above the diagonal line, indicating that \method achieves shorter enumeration times on the vast majority of queries. The improvement is particularly pronounced for queries with large result sets, where \method benefits from rapid embedding generation. For smaller queries, the advantage is smaller but still consistent, suggesting that \method not only scales well with query size but also maintains low per-query overhead. These results confirm that the performance gain of \method is not limited to a few extreme cases but is robust across different query characteristics.

\subsubsection{Unsolved Queries Number Comparison}
\label{sec:unsolved_ana_exp}

\begin{small}
\begin{table}[ht]
    \caption{Number of unsolved queries}
    \label{tab:unsolved}
  \begin{tabular}{ | c | c | c | c | c | c | } 
    \hline
    Methods & Human & WordNet & DBLP & EU2005 & Patents \\
    \hline
        DAF & 25 & 16 & 86 & 50 & 56 \\ \hline
        RM & 33 & 4 & 58 & 22 & 34 \\ \hline
        VEQ & 14 & 9 & 13 & \textbf{0} & 22 \\ \hline
        GUP & 7 & 48 & \textbf{0} & \textbf{0} & \textbf{6} \\ \hline
        BICE & \textbf{5} & 16 & \textbf{0} & \textbf{0} & 10 \\ \hline
        BSX & 774 & 5155 & 171 & 666 & 1042 \\ \hline
        \plainmethod & \textbf{5} & \textbf{0} & \textbf{0} & \textbf{0} & 10 \\
    \hline
\end{tabular}
\end{table}
\end{small}

The number of unsolved queries is an important metric for evaluating the performance of subgraph matching algorithms, as it reflects the algorithm’s capability to handle difficult queries.
\tabref{tab:unsolved} reports the number of unsolved queries across five datasets, excluding HPRD, Yeast, and YouTube, where all methods, except BSX, successfully return results within the time limit due to the simplicity of these datasets (BSX encounters 0, 179, and 1236 timeouts on HPRD, Yeast, and YouTube, respectively).
A query is considered unsolved if it cannot be answered within 6 minutes. On most datasets, \method yields fewer unsolved queries than other methods, demonstrating its effectiveness in reducing redundant computation and in applying pruning strategies that prevent excessive exploration of unpromising branches.

\subsubsection{Enumeration Time Comparison Varying the Limit Number}
\label{sec:limit_number_enumeration_time_study}

To test the embedding generation speed of different algorithms, we vary the limit number from $10^5$ to $5\times 10^7$ and record the enumeration times to generate 100K, 1M, 5M, 10M, and 50M embeddings for each algorithm. 
\ifbool{fullversion}{
Note that we report the total enumeration time instead of the EPS (embeddings per second) metric in our experiments, as the mean EPS is often dominated by extreme values (detailed EPS results are provided in Appendix \ref{sec:eps_exp}).}{
Note that we report the total enumeration time instead of the EPS (embeddings per second) metric in our experiments, as the mean EPS is often dominated by extreme values (detailed EPS results are provided in the appendix of the full paper \cite{codes}).
}
\figref{fig:enum_time_comparison_diff_ret} shows the enumeration time comparison as the limit number varies. We observe that as the limit number increases, \method exhibits a lower runtime growth rate, mainly due to the black-white encoding technique, which allows \method to generate a batch of results at the same time. 
In contrast, GuP lacks a grouping technique, while DAF and RM only group matches of leaf vertices, causing their enumeration time to increase significantly with the limit number.

\subsubsection{Memory Usage Comparison}

Memory consumption is a critical factor in the practical deployment of subgraph matching algorithms. In this experiment, we compare the average peak memory consumption of \method with that of other methods. The results are presented in Table \ref{tab:memory_usage}. To reduce the overhead of memory allocation during execution, we pre-allocate a fixed amount of memory. Consequently, \method consumes more memory than other methods for small data graphs (e.g., Yeast, Human, and HPRD). However, as the data graph size increases, the memory usage of \method becomes comparable to DAF and RM, and is slightly lower in practice due to a more efficient implementation. Compared to VEQ and GuP, \method uses less memory because it does not require storing additional information in the auxiliary structure $\mathcal{A}$ for pruning.

\begin{small}
\begin{table}[ht]
    \caption{Average Peak Memory Usage (MB)}
    \label{tab:memory_usage}
  \resizebox{\linewidth}{!}{
  \begin{tabular}{ | c | c | c | c | c | c | c | c | } 
    \hline
    Datasets & CEMR & DAF & RM & VEQ & GuP & BICE & BSX \\ \hline
    Yeast & 30.8 & 6.0 & 6.4 & 9.6 & \textbf{5.2} & 9.9 & 33.2 \\ \hline
    Human & 36.3 & \textbf{7.7} & 9.6 & 15.2 & 12.2 & 15.4 & 18.2 \\ \hline
    HPRD & 42.1 & 28.0 & 9.2 & 16.6 & \textbf{8.3} & 18.3 & 10.0 \\ \hline
    WordNet & 116.2 & \textbf{27.4} & 62.8 & 125.8 & 130.1 & 151.8 & 1249.5 \\ \hline
    DBLP & 92.2 & \textbf{84.5} & 110.9 & 261.1 & 112.4 & 248.1 & 178.0 \\ \hline
    EU2005 & \textbf{536.6} & 639.8 & 726.3 & 1194.6 & 881.8 & 1399.6 & 1287.0 \\ \hline
    YouTube & \textbf{297.5} & 372.3 & 349.1 & 838.2 & 351.7 & 771.6 & 488.3 \\ \hline
    Patents & \textbf{1028.0} & 1189.6 & 1336.7 & 3180.8 & 1553.8 & 3169.9 & 2082.4 \\ \hline
\end{tabular}
}
\end{table}
\end{small}

\subsection{Ablation Studies of \plainmethod}
\label{sec:ablation_studies}

In this section, we conduct experiments to evaluate the effectiveness of our proposed techniques, including \texttt{CEM}, \texttt{CER}, pruning techniques, and matching orders.

\subsubsection{Effectiveness of \texttt{CEM}}

\begin{figure}[htbp]
    \centering
    \begin{subfigure}[t]{0.48\linewidth}
	    \includegraphics[width=\linewidth]{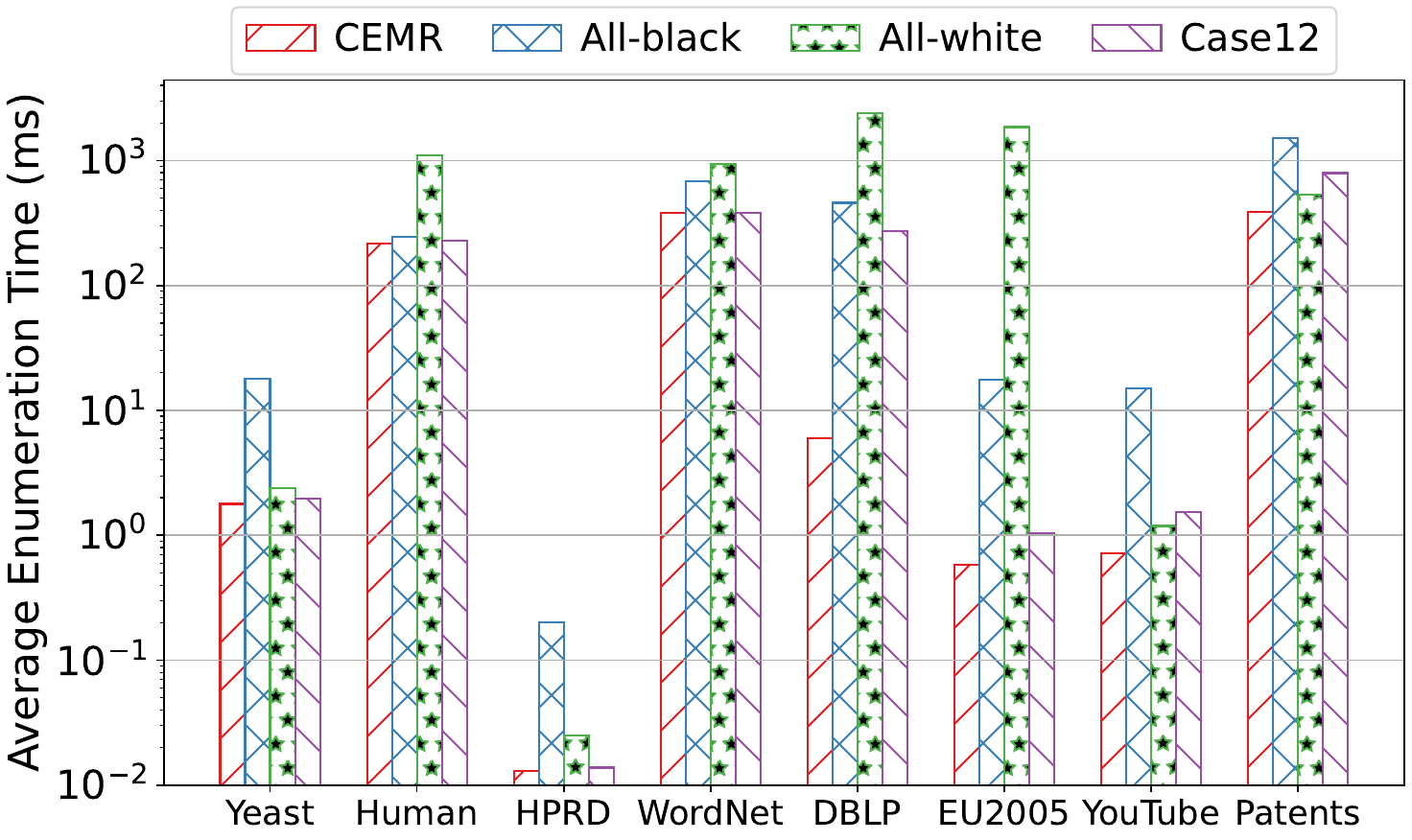}
        \caption{Ablation study of \texttt{CEM} under different vertex encodings.}
		\label{fig:ablation_encoding}
	\end{subfigure}
    \hspace{0.1cm}
    \begin{subfigure}[t]{0.48\linewidth}
        \includegraphics[width=\linewidth]{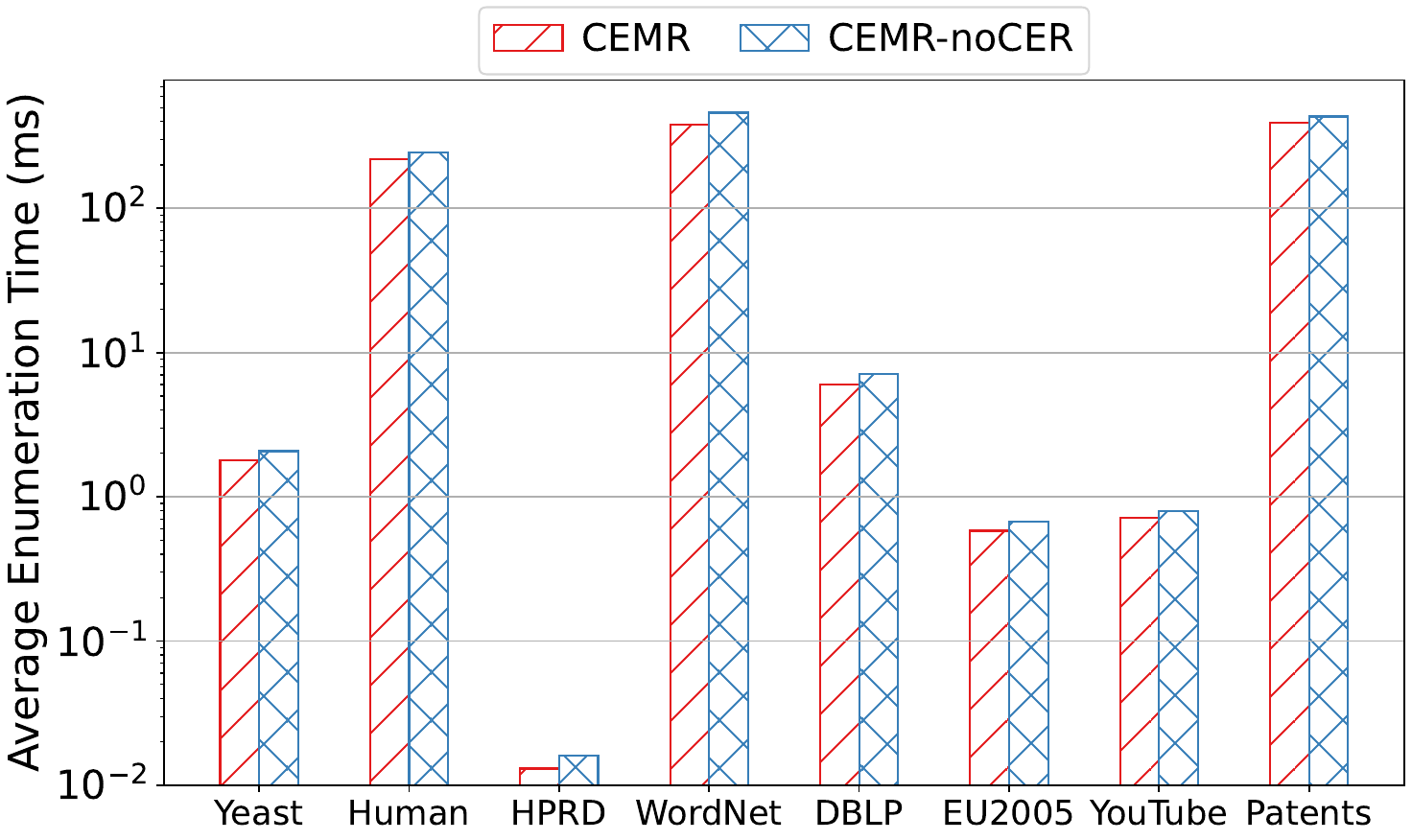}
        \caption{Ablation study of \texttt{CER}. 
        }
		\label{fig:ablation_cer}
    \end{subfigure}

    \begin{subfigure}[t]{0.48\linewidth}
	    \includegraphics[width=\linewidth]{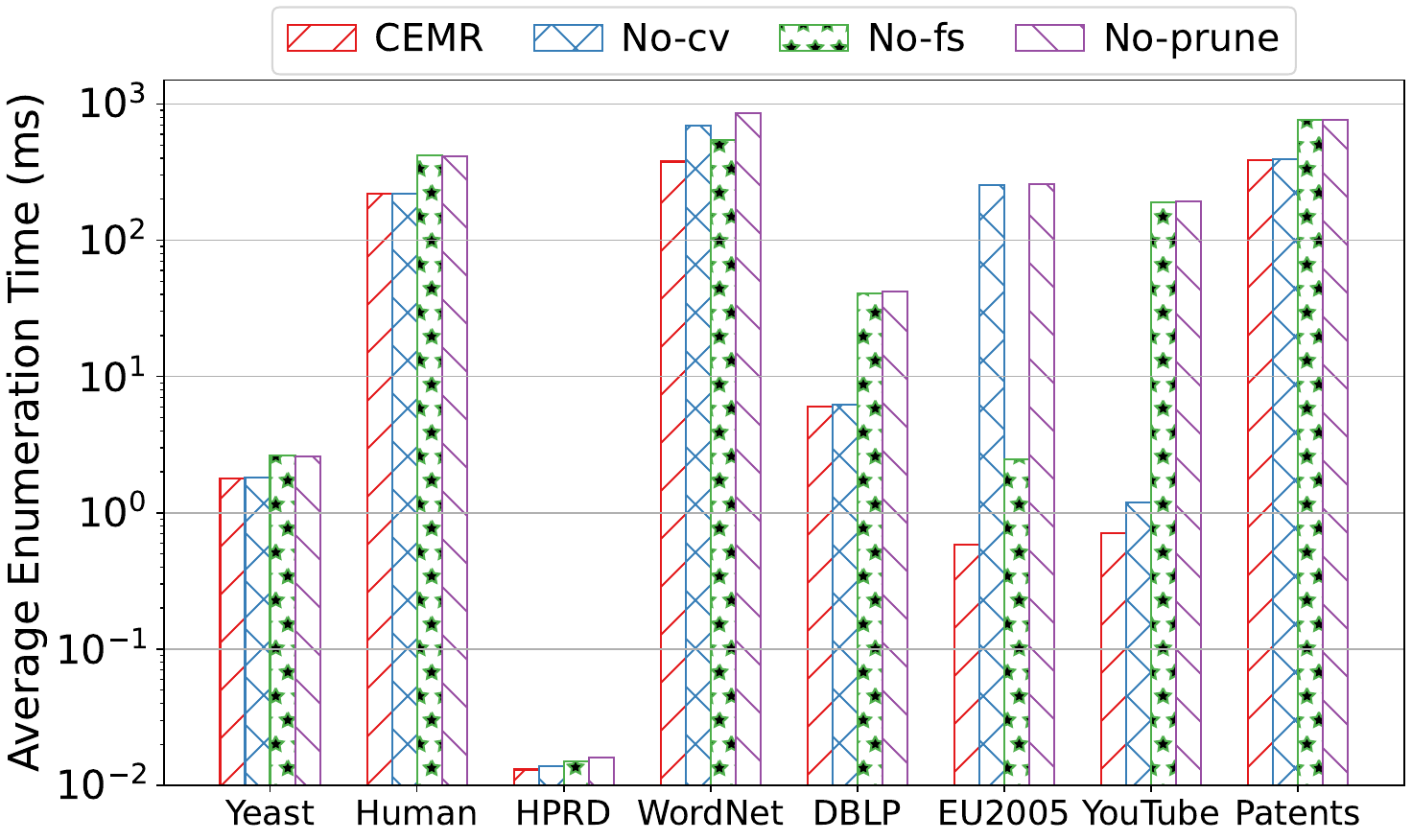}
        \caption{Ablation study of pruning techniques.}
		\label{fig:ablation_pruning}
	\end{subfigure}
    \hspace{0.1cm}
    \begin{subfigure}[t]{0.48\linewidth}
        \includegraphics[width=\linewidth]{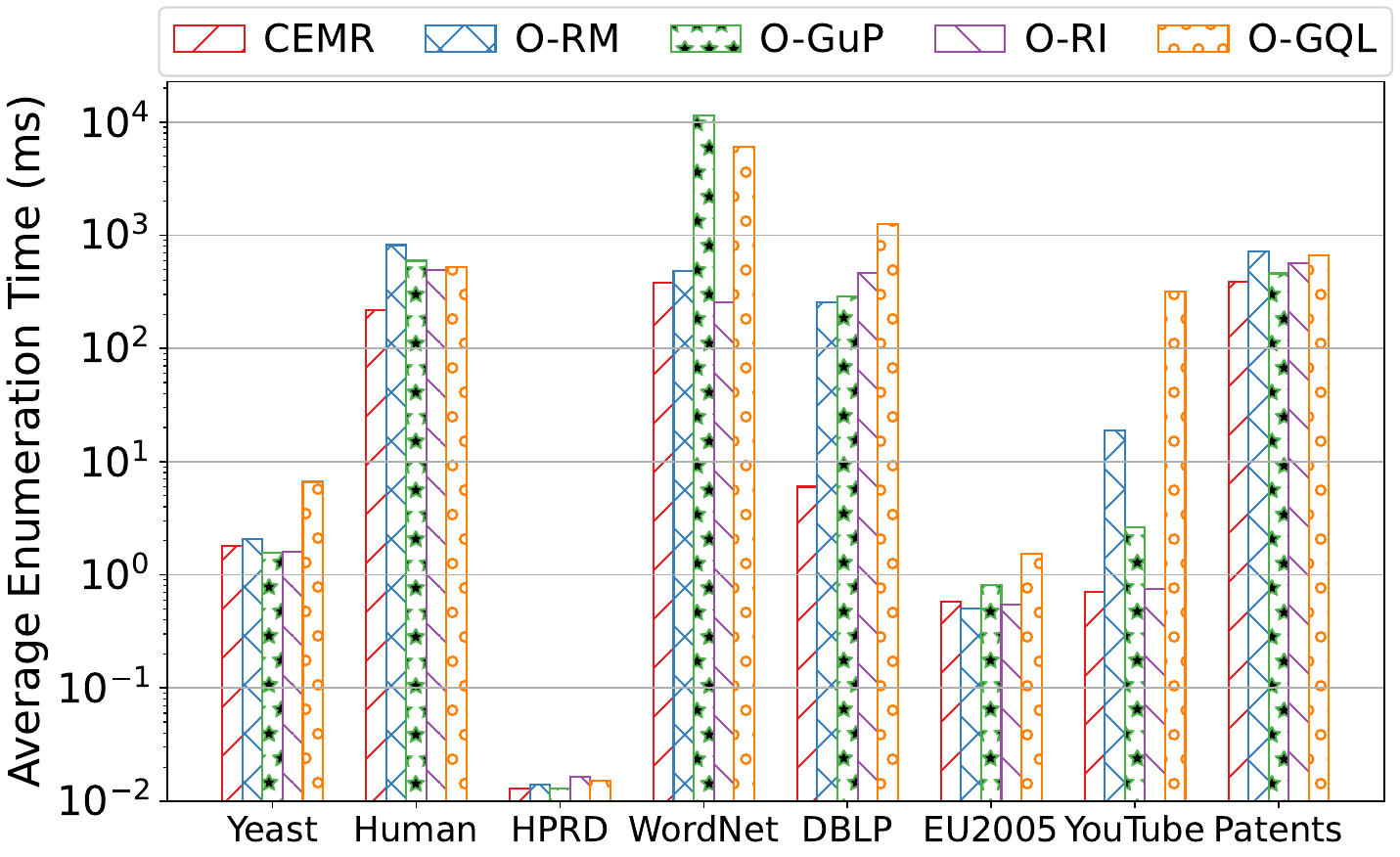}
        \caption{Ablation study of matching order. \texttt{O-*} uses orders from other methods.}
		\label{fig:ablation_order}
    \end{subfigure}
	\caption{Ablation studies.}
	\label{fig:ablation_study}
    \vspace{-0.2cm}
\end{figure}

We design a black-white vertex encoding scheme for common extension merging (\texttt{CEM} in \secref{sec:merge}), and propose an encoding strategy guided by a cost model as described in \secref{sec:encoding_strategy}.
To investigate the effect of \texttt{CEM}, we compare the average enumeration time of \method with three variants, as shown in \figref{fig:ablation_encoding}. 
In \figref{fig:ablation_encoding}, \texttt{All-black} and \texttt{All-white} encode every query vertex as black and white, respectively, while \texttt{Case12} encodes only the query vertices without forward neighbors as white, so there are no case 3 or case 4 extensions in \texttt{Case12}.

On easy datasets (e.g., Yeast, HPRD, YouTube, and Patents), most queries generate a large number of embeddings, and only a few search branches fail to yield valid results. 
In such scenarios, the aggregated mappings of white vertices help produce results more quickly, so both \method and \texttt{All-white} accelerate the enumeration process compared with \texttt{All-black}. 
On difficult datasets (e.g., Human, WordNet, DBLP, and EU2005), many redundant candidates arise from complex query topologies or vertex conflicts. 
Here, mapping certain query vertices to a single data vertex can better expose opportunities for pruning, and thus \texttt{All-black} performs better than \texttt{All-white}. 
We also observe that \texttt{Case12} consistently outperforms \texttt{All-black}. 
Overall, since \method selects vertex encoding based on a cost model, it achieves the best performance among all four strategies.

\subsubsection{Effectiveness of \texttt{CER}}


\texttt{CER} uses common extension buffers to reuse extension results across different search branches. \texttt{CEMR-noCER} in \figref{fig:ablation_cer} disables the \texttt{CER} component of \method. The results in \figref{fig:ablation_cer} show that \texttt{CER} accelerates the enumeration process, achieving a 1.11x-1.23x speedup. This improvement comes from the reduced redundant computations and the efficient sharing of partial embeddings among search branches.

\subsubsection{Effectiveness of Two Pruning Techniques}
We conduct an ablation study on \method with three variants: \texttt{No-cv}, \texttt{No-fs}, and \texttt{No-prune}, which denote the variants without contained vertex pruning (\secref{sec:contained_vertex_pruning}), extended failing set pruning (\secref{sec:failing_set_pruning}), and both pruning techniques, respectively. As shown in \figref{fig:ablation_pruning}, the full version of \method consistently outperforms all variants, indicating that both pruning techniques effectively reduce the search space and improve overall efficiency.

\subsubsection{Effectiveness of Matching Order}
To evaluate the matching order described in \secref{sec:matching_ordering}, we replace the original order in \method with those generated by RM \cite{sun2020rapidmatch} and GuP \cite{arai2023gup}. In addition, we include two well-established ordering methods recommended by a widely used survey \cite{sun2020memory}, namely RI \cite{bonnici2013subgraph} and GQL \cite{he2008graphs}. We do not compare with DAF, VEQ, BICE, or BSX, since they adopt dynamic ordering strategies, where the next matched vertex may vary during execution.
From the results shown in \figref{fig:ablation_order}, we observe that the matching order produced by \method leads to more stable performance in most cases. Moreover, RM and GuP usually achieve faster enumeration time than those in \figref{fig:total_time_comparison} (up to 203.9x). This indicates that these methods can also benefit from the redundant computation reduction techniques proposed in \method.

\subsection{Experiments on LSQB}

To better demonstrate the effectiveness of our method, we conduct evaluations beyond comparisons with state-of-the-art subgraph query algorithms. Specifically, we compare \method with the high-performance open-source graph database K{\`u}zu \cite{feng2023kuzu} on the LSQB benchmark \cite{mhedhbi2021lsqb}. LSQB is a subgraph query benchmark derived from LDBC SNB \cite{erling2015ldbc}, which consists of directed graphs with both vertex and edge labels and focuses on complex multi-join queries typical in social network analysis. It contains 9 queries, and we modified the optional and negative clauses in q7, q8, and q9 into standard clauses, resulting in 9 connected query graphs. We did not set a running time threshold and terminated execution only when the method retrieved all matches.

As shown in \figref{fig:lsqb_exp}, which reports the total enumeration time aggregated over all nine LSQB queries, \method consistently outperforms K{\`u}zu across all data scales, achieving speedups of 2.12x to 4.00x. These results demonstrate \method's efficacy on directed, multi-labeled complex workloads. The performance advantage stems from several factors: (1) our preprocessing-enumeration strategy uses effective filtering to prune unpromising search branches, reducing computational costs; and (2) as a lightweight prototype, \method avoids the overhead inherent in full-featured graph databases like K{\`u}zu.

\begin{figure}
    \setlength{\abovecaptionskip}{0.2cm}
    \centering
    \includegraphics[width=0.7\linewidth]{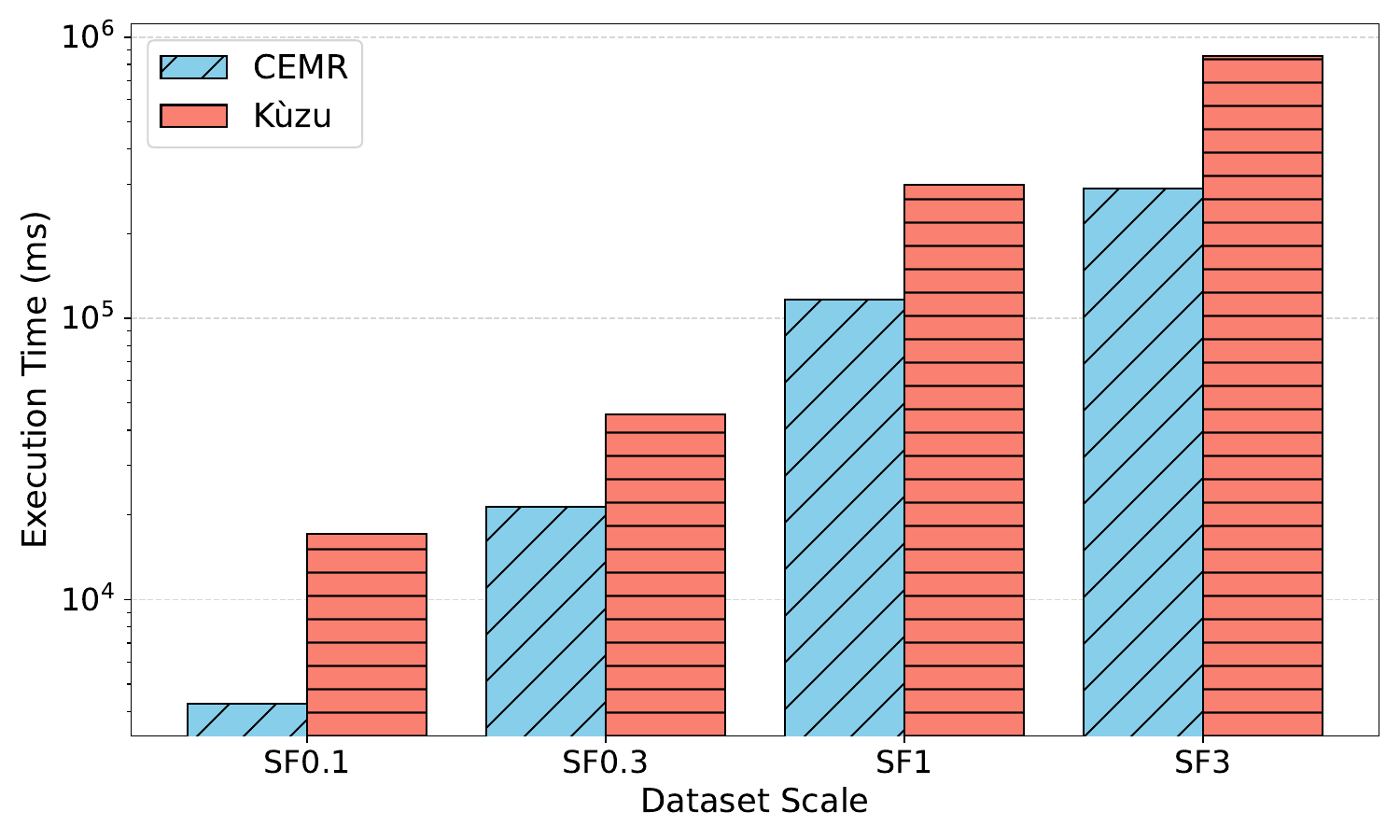}
    \caption{Execution time comparison on LSQB across different data scales.}
    \label{fig:lsqb_exp}
    \vspace{-0.2cm}
\end{figure}

\section{Conclusion}
\label{sec:conclusion}

In this paper, we propose a subgraph matching algorithm \method for redundant extension elimination. We propose a forward-looking optimization \texttt{CEM} based on black-white vertex encoding and a backward-looking optimization \texttt{CER} based on common extension buffers. They effectively improve the performance of enumeration. The experimental results show the effectiveness of \method.

\begin{acks}
This work was supported by New Generation Artificial Intelligence National Science and Technology Major Project under grant 2025ZD0123304 and NSFC under grant 62532001. This is a research achievement of National Engineering Research Center of New Electronic Publishing Technologies. The corresponding author of this work is Lei Zou (zoulei@pku.edu.cn).
\end{acks}

\clearpage

\bibliographystyle{ACM-Reference-Format}
\bibliography{sample}

\ifbool{fullversion}{
    \appendix 
    
\section{Pseudocode of Enumeration Combining \texttt{CEM} and \texttt{CER}}
\label{sec:pse_combined_algo}

\algref{algo:hybrid_enumerate} in \secref{sec:merge} only incorporates the \texttt{CEM} technique, while the \texttt{CER} technique is described in textual form in \secref{sec:reuse}. In this appendix, we present the complete pseudocode of the enumeration method that integrates both \texttt{CEM} and \texttt{CER}, as shown in \algref{algo:combined_algo}. For clarity, several components of the algorithm are extracted into helper functions, including \texttt{CompExtensions}, \texttt{CacheBuf}, and \texttt{ReuseBuf}.

\begin{small}
\begin{algorithm}
    \caption{Enumerate($Q$, $\mathcal{A}$, $O$, $\mathcal{M}$, $M$, $i$) (black-white enumeration framework with common extension reusing)}
    \label{algo:combined_algo}
    \SetAlgoVlined
    \SetAlgoVlined
    \DontPrintSemicolon
    \KwIn{The query $Q$, auxiliary structure $\mathcal{A}$, matching order $O$, result set $\mathcal{M}$, an (aggregated) embedding $M$ of $Q_{i}$, and the backtracking depth $i$.}
    \SetKwProg{myproc}{Procedure}{}{}

    \SetKwFunction{BackTrack}{BackTrack}
    \SetKwFunction{ReuseBuf}{ReuseBuf}
    \SetKwFunction{CacheBuf}{CacheBuf}
    \SetKwFunction{CompExtensions}{CompExtensions}
    \SetKwFunction{Enumerate}{Enumerate}
    \If{$i$ = $|V(Q)|$}{
        Append valid full embeddings in $M$ to $\mathcal{M}$; \textbf{return}\;
    }
    \lIf{$u_i.f \land CEB(u_i).g$} {
        $\mathcal{E}\leftarrow$\ReuseBuf($O$, $M$, $i$)
    }
    \Else{
        $\mathcal{E}\leftarrow$\CompExtensions($Q, \mathcal{A}, O, M ,i$)\;
        \lIf{$u_i.f$}{
            \CacheBuf($\mathcal{E}$, $O$, $i$)
        }
    }
    \ForEach{$M' \in \mathcal{E}$}{
        \Enumerate($Q$, $\mathcal{A}$, $O$, $\mathcal{M}$, $M'$, $i+1$)\;
        \lForEach{$u_j \in u_i.child$}{$CEB(u_j).g\leftarrow \texttt{false}$}
    }

    \BlankLine
    \myproc{\CompExtensions{$Q, \mathcal{A}, O, M ,i$}}{
    \If(\tcp*[f]{Case 1 or Case 2}){$WT(u_i) = \emptyset$}{
        $R_M(u_i)\leftarrow \cap_{u_j\in BK(u_i)}\mathcal{A}_{u_i}^{u_j}(M[u_j])$\;
        \lIf{$c(u_i) = black$}{\Return $\{M \oplus v \mid v\in R_M(u_i)\}$}
        \lElse{\Return $\{M \oplus R_M(u_i)\}$}
    }
    \Else(\tcp*[f]{Case 3 or Case 4}){
        \If{$BK(u_i) \neq \emptyset$} {
            $R_M(u_i)\leftarrow \cap_{u_j\in BK(u_i)}\mathcal{A}_{u_i}^{u_j}(M[u_j])$\;
        }
        \Else {
            $u_j\leftarrow$ $\arg\min_{u\in WT(u_i)} |M[u]|$\;
            $R_M(u_i)\leftarrow\cup_{v \in M[u_j]} \mathcal{A}_{u_i}^{u_j}(v)$\;
        }
        $\mathcal{E} \leftarrow \emptyset$\;
        \If(\tcp*[f]{Case 3}){$c(u_i)=black$} {
            \ForEach{$v\in R_M(u_i)$} {
                $M_v\leftarrow M$\;
                \ForEach{$u_j \in WT(u_i)$}{
                    $M_v[u_j] \leftarrow M[u_j] \cap \mathcal{A}_{u_j}^{u_i}(v)$\;
                }
                \If{$\forall u_j \in WT(u_i), M_v[u_j] \neq \emptyset$}{
                    $\mathcal{E}\leftarrow \mathcal{E}\cup \{M_v\oplus v\}$\;
                }
                
            }
        }
        \Else(\tcp*[f]{Case 4}){
            $\mathcal{S} \leftarrow$decompose $WT(u_i)$'s mappings\;
            \If(\tcp*[f]{Case 4.1}){$|\mathcal{S}|  \geq|R_M(u_i)|$}{
                Same pseudo-code as lines 23-28\;
            }
            \Else(\tcp*[f]{Case 4.2}){
                \ForEach{$M_t \in \mathcal{S}$}{
                    $R_{M_t}(u_i)\leftarrow \cap_{u_j\in N_-^O(u_i)}\mathcal{A}_{u_i}^{u_j}(M_t[u_j])$\; 
                    $\mathcal{E}\leftarrow \mathcal{E}\cup \{M_t\oplus R_{M_t}(u_i)\}$\;
                }
            }
        }
        \Return $\mathcal{E}$\;
    }  
    }
    
    \BlankLine
    \myproc{\CacheBuf{$\mathcal{E}, O, i$}}{
        Clear $CEB(u_i).b$\;
        \If(\tcp*[f]{Case 1 or Case 2}){$WT(u_i) = \emptyset$}{
            $R_M(u_i)\leftarrow \cup_{M'\in \mathcal{E}}M'[u_i]$\;
            Cache $R_M(u_i)$ in $CEB(u_i).b$\;
        }
        \Else(\tcp*[f]{Case 3 or Case 4}){
            Cache $\mathcal{E}$ in $CEB(u_i).b$\;
        }
        $CEB(u_i).g \leftarrow \texttt{true}$
    }

    \BlankLine
    \myproc{\ReuseBuf{$O, M ,i$}}{
        \If(\tcp*[f]{Case 1 or Case 2}){$WT(u_i) = \emptyset$}{
            \lIf{$c(u_i)=black$}{\Return $\{M \oplus v \mid v\in CEB(u_i).b\}$}
            \lElse{\Return $\{M \oplus CEB(u_i).b\}$}
        }
        \Else(\tcp*[f]{Case 3 or Case 4}){
            Recover $\mathcal{E}$ from $CEB(u_i).b$\;
            \Return $\mathcal{E}$\;
        }
    }
\end{algorithm}
\end{small}

Specifically, \texttt{CompExtensions} computes the set of possible extensions $\mathcal{E}$ for a partial embedding $M$, following the same procedure as in \algref{algo:hybrid_enumerate}. If $u_i.f = \texttt{true}$, the computed extensions are cached for future reuse by invoking \texttt{CacheBuf} (line 6). When the algorithm later encounters a brother embedding branch, the cached extensions can be directly retrieved via \texttt{ReuseBuf} (line 3). The boolean flag $CEB(u_i).g$, which indicates whether the buffer is valid, is set to \texttt{true} when \texttt{CacheBuf} is invoked (line 45), and is reset to \texttt{false} when the mapping of the parent of $u_i$ changes (line 9).

\section{Theoretical Analysis}

\subsection{Reduced Search Space Size by \texttt{CEM}}

\texttt{CEM} merges multiple search branches, thereby reducing the number of invocations of the \texttt{Enumerate} function (shown in \algref{algo:combined_algo}) as well as the computation of extensible sets. In this section, we provide a theoretical analysis of the search space reduction achieved by \texttt{CEM}. We first introduce several definitions.

\begin{definition}[Search tree and search node]
Given a matching order $O$ of a query graph $Q$, the \emph{search tree} for subgraph matching of $Q$ on a data graph $G$ is a tree structure in which each node represents a partial embedding. For clarity, we use the term \emph{node} to denote a vertex in the search tree corresponding to a partial embedding. 
For a node $M$, we use $T_M$ to denote the search subtree rooted at $M$, and denote its size (excluding $M$ itself) by $S_M$.
\end{definition}

\begin{definition}[Extension width]
\label{def:extension width}
Given a query graph $Q$, a matching order $O = (u_0, u_1, \dots, u_{|V(Q)|-1})$, and a partial embedding $M$, suppose the next query vertex to be extended is $u_i$.
For any query vertex $u_k$ with $k \ge i$ along this extension process, the \emph{extension width} from $M$ to $u_k$, denoted as $W_M(u_k)$, is defined as the number of search tree nodes at the level corresponding to $u_k$ in the subtree $T_M$.
Among these nodes, the number of distinct mappings of $u_k$ at this level is denoted as $D_M(u_k)$.
\end{definition}

\begin{example}
\begin{figure}
    \centering
    \begin{subfigure}[c]{0.34\columnwidth}
        \centering
        \includegraphics[width=0.8\linewidth]{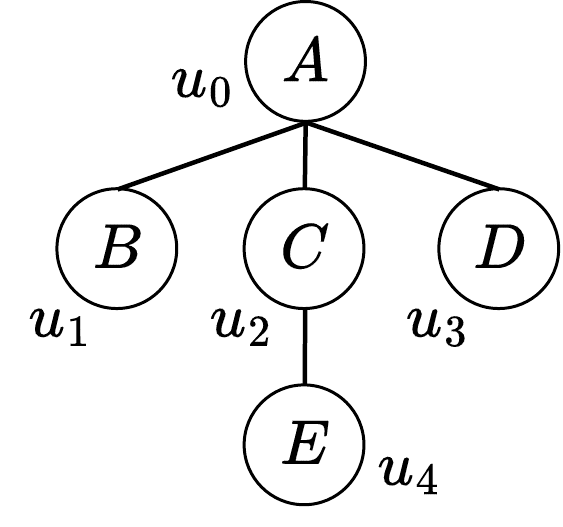}
        \caption{Query graph.}
        \label{fig:query_graph_new}
    \end{subfigure}
    \hspace{0.2cm}
    \begin{subfigure}[c]{0.49\columnwidth}
        \centering
        \includegraphics[width=0.8\linewidth]{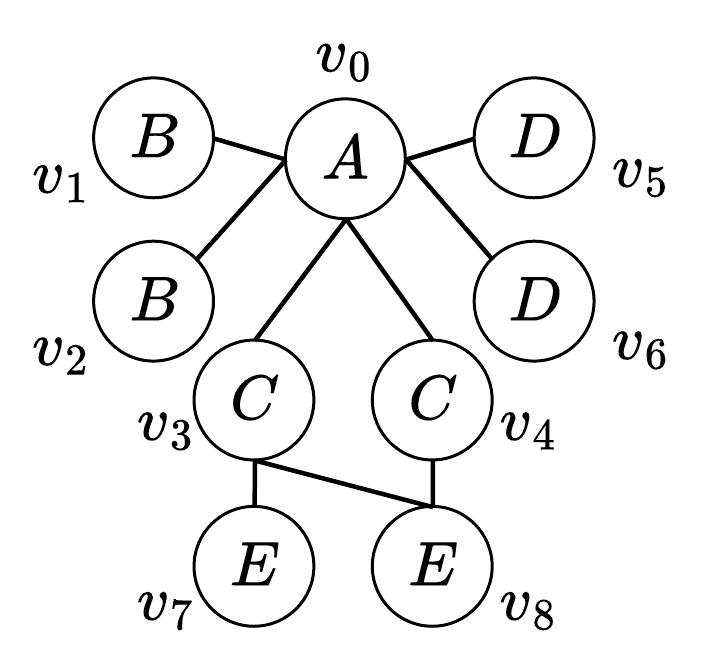}
        \caption{Data graph.}
        \label{fig:data_graph_new}
    \end{subfigure}
    \begin{subfigure}[c]{\columnwidth}
        \centering
        \includegraphics[width=0.9\linewidth]{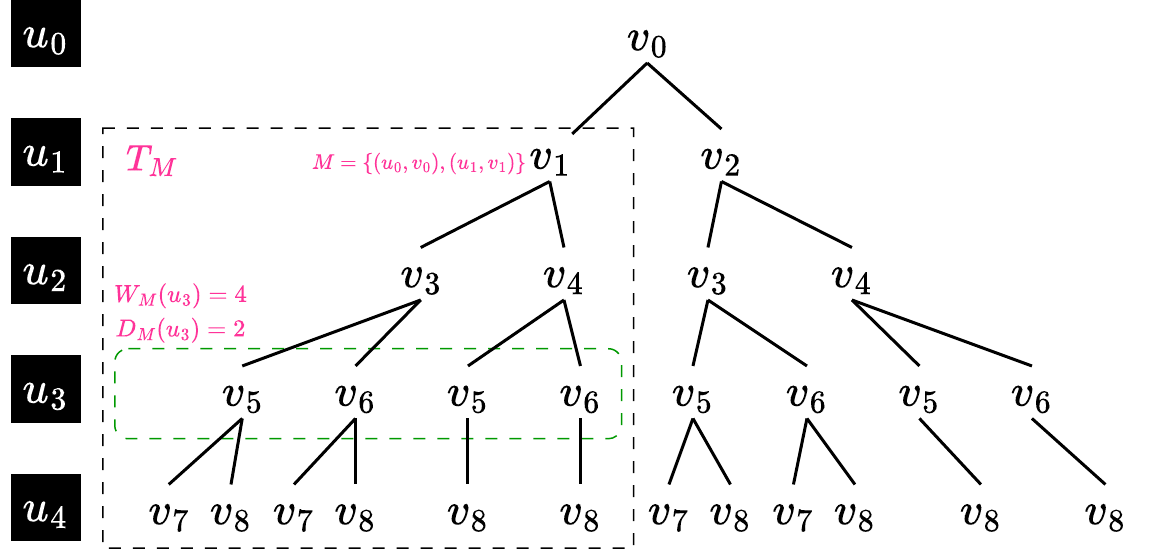}
        \caption{An example of search tree (All vertices are black).}
        \label{fig:search_tree_all_black}
    \end{subfigure}
    \caption{Illustration of search tree.}
    \label{fig:appendix_example}
\end{figure}

\figref{fig:appendix_example} illustrates a search tree rooted at $\{(u_0, v_0)\}$\footnote{There is actually a virtual node $\emptyset$ above $\{(u_0, v_0)\}$, which is used to organize the search space into a single search tree when the first query vertex has multiple candidates.}. 
For simplicity, we omit the complete partial embedding and only display the newly added mapping at each level.
At the level corresponding to query vertex $u_1$, the node labeled by $v_1$ represents the partial embedding $M = \{(u_0, v_0), (u_1, v_1)\}$.
For this embedding, we have $W_M(u_3) = 4$ and $D_M(u_3) = 2$, since there are four nodes at the $u_3$ level in $T_M$, while the distinct mappings of $u_3$ in $T_M$ are only $v_5$ and $v_6$.
\end{example}

Using the above definitions, we now analyze the reduction in the number of search tree nodes achieved by \texttt{CEM} compared with the case without \texttt{CEM}, which corresponds to the traditional method where all query vertices are treated as black.

\begin{theorem}
\label{the:search_size}
Suppose we are extending a partial embedding $M$, and the next vertex to be extended is a white vertex $u_i$\footnote{If $u_i$ is black, there is no difference from the traditional method shown in \algref{algo:set_intersection_framework}.}. Then the following results hold.
\begin{itemize}[leftmargin=10pt]
    \item If $u_i$ has no forward neighbor, let the size of the search subtree $T_M$ under the traditional method be denoted by $S_M$. When employing \texttt{CEM}, the size of the search subtree is reduced to $S_M / |R_M(u_i)|$.
    \item If $u_i$ has a forward neighbor $u_k$, the vertices $\{u_j \mid i < j < k\}$ do not connect to $u_i$, and $u_k$ is either a black vertex (Case 3) or a white vertex extended in Case 4.1, then under the traditional method the extension width at level $u_j$ is $W_M(u_j)$ for $i \le j \le k$. When employing \texttt{CEM}, the extension width is reduced to $W_M(u_j) / |R_M(u_i)|$ for $i \le j < k$, and becomes $(W_M(u_{k-1}) / |R_M(u_i)|) \times D_M(u_k)$ at level $u_k$.
    \item If $u_i$ has a forward neighbor $u_k$, the vertices $\{u_j \mid i < j < k\}$ do not connect to $u_i$, and $u_k$ is a white vertex extended in Case 4.2, then the extension width is reduced to $W_M(u_j) / |R_M(u_i)|$ for $i \le j < k$, and becomes $(W_M(u_{k-1}) / |R_M(u_i)|) \times \prod_{u_l \in WT(u_k)} |M[u_l]|$ at level $u_k$.
\end{itemize}
\end{theorem}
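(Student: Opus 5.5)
The plan is to compare, level by level, the traditional all-black search subtree $T_M$ with the \texttt{CEM} subtree. The central tool is \lemref{lemma:redundant}. Between $u_i$ and the first later vertex adjacent to it, no extension depends on the mapping of $u_i$, so the branches $M\oplus v$ for $v\in R_M(u_i)$ have identical subtrees up to relabelling of $u_i$. Conflicts involving $u_i$ are deferred under \texttt{CEM}, so we count search nodes modulo the final injectivity check, as the statement implicitly does. Concretely, we first show by induction on $j$ ($i<j<k$, or $j\le n-1$ when $u_i$ has no forward neighbour) a structural claim. For any two $v,v'\in R_M(u_i)$, the map that replaces $(u_i,v)$ by $(u_i,v')$ is a bijection between the level-$u_j$ nodes of $T_{M\oplus v}$ and of $T_{M\oplus v'}$. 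The induction step is exactly \lemref{lemma:redundant}, since $u_i\notin N_-^O(u_j)$, so $u_i\notin RS(u_j)$ as far as extension is concerned. It follows that $W_M(u_j)=|R_M(u_i)|\cdot W_{M\oplus v}(u_j)$ for all such $j$.

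For the first bullet, $u_i$ has no forward neighbour, so the claim holds for every subsequent level. Under \texttt{CEM} (Case 2) the $|R_M(u_i)|$ children collapse into the single aggregated node $M\oplus R_M(u_i)$. Its subtree is isomorphic to one copy $T_{M\oplus v}$, because all later extensions never read $M[u_i]$. Summing over levels therefore gives total size $S_M/|R_M(u_i)|$.

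For the second and third bullets, the same argument gives width $W_M(u_j)/|R_M(u_i)|$ for $i\le j<k$: each aggregated node at level $u_j$ stands for exactly $|R_M(u_i)|$ traditional nodes. The remaining task is level $u_k$, where $u_i\in WT(u_k)$ and the aggregation may be split.

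In Case 3 or Case 4.1, each aggregated node $M'$ at level $u_{k-1}$ produces one child $M'_v\oplus v$ per $v\in R_{M'}(u_k)$ that survives the shrinking step. The traditional nodes represented by $M'$ produce, at level $u_k$, precisely the mappings $v$ for which some representative is adjacent. We would argue that these are the same surviving $v$'s, and identify their number with the distinct-mapping count $D_M(u_k)$ (per level-$u_{k-1}$ aggregated node), giving $(W_M(u_{k-1})/|R_M(u_i)|)\times D_M(u_k)$. In Case 4.2, each aggregated node is decomposed into the Cartesian product $\mathcal{S}$ of its white-backward-neighbour mappings, and each $M_t\in\mathcal{S}$ yields one child. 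This multiplies the width by $|\mathcal{S}|=\prod_{u_l\in WT(u_k)}|M[u_l]|$.

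The main obstacle is the $u_k$ level in Case 3/4.1. Making precise that the number of children of one aggregated node equals $D_M(u_k)$ requires checking that the union/intersection computation of $R_M(u_k)$ together with the emptiness filter (lines 20--23) produces exactly one node per distinct valid mapping of $u_k$. The first attempt would be to prove this by a double inclusion using the adjacency sets $\mathcal{A}_{u_j}^{u_k}$. A further subtlety is interpreting $D_M(u_k)$ per parent node, since the global count in \defref{def:extension width} is taken over $T_M$.
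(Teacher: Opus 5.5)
The paper states this theorem without proof and only illustrates it with an example, so your argument has to stand on its own. Your treatment of levels $u_i,\dots,u_{k-1}$, and hence of the first bullet, is sound once the implicit hypotheses are stated:
\begin{itemize}
\item conflict checks involving $u_i$, the pruning rules and \texttt{CER} are all ignored;
\item $u_i$ is extended in Case~2, i.e.\ $WT(u_i)=\emptyset$, since otherwise there is no single aggregated node at level $u_i$;
\item the baseline is the same encoding with only $u_i$ recoloured black, as in the paper's example.
\end{itemize}
Against the literally all-black baseline of your first sentence, a later white vertex can change the factor $|R_M(u_i)|$. The aggregated subtree is then not one copy of the all-black $T_{M\oplus v}$. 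The induction step is really an inspection of \algref{algo:hybrid_enumerate}: extending $u_j$ reads and overwrites only $M[u]$ for $u\in N_-^O(u_j)$. You do not need \lemref{lemma:redundant} or $RS(u_j)$ for it. In any case, $u_i\notin RS(u_j)$ does not follow from $u_i\notin N_-^O(u_j)$ alone.

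The genuine gap is the level-$u_k$ step, which you flag but leave open; it cannot be closed for the statement as written. Your double inclusion does work, in both the $BK(u_k)\neq\emptyset$ branch and the union branch. For an aggregated node $g$ at level $u_{k-1}$, the $v$'s surviving lines 20--23 are exactly the distinct $u_k$-mappings among the baseline children of the $|R_M(u_i)|$ nodes that $g$ represents. Call this number $D_g$. The \texttt{CEM} width at level $u_k$ is then $\sum_g D_g$ over the $W_M(u_{k-1})/|R_M(u_i)|$ nodes $g$. But $D_M(u_k)$ counts distinct mappings over all of $T_M$, so you only get $D_g\le D_M(u_k)$, and the stated value is merely an upper bound.

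Equality fails in general. Take:
\begin{itemize}
\item $k=i+2$ and $R_M(u_i)=\{a,b\}$;
\item $u_{i+1}$ black, not adjacent to $u_i$, with extensions $x_1,x_2$;
\item $u_k$ black with $N_-^O(u_k)=\{u_i,u_{i+1}\}$ and $\mathcal{A}_{u_k}^{u_{i+1}}(x_t)=\{y_t\}$, where $y_1,y_2$ are adjacent to both $a$ and $b$.
\end{itemize}
Then $W_M(u_{k-1})=4$ and $D_M(u_k)=2$, so the formula predicts $4$ nodes at level $u_k$, while \texttt{CEM} creates $2$. Reading $D_M(u_k)$ per node, as you suggest, changes the statement. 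The product form would still need that count to be the same for every $g$.

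Case 4.2 has the same defect. Node $g$ contributes $|\mathcal{S}_g|=\prod_{u_l\in WT(u_k)}|g[u_l]|$, and only the factor $g[u_i]=R_M(u_i)$ is guaranteed constant across $g$. Other white neighbours' sets may be shrunk differently at intermediate levels.

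To finish, either prove the sum (upper-bound) version, or add a uniformity hypothesis that forces equality. For Case 3 and Case 4.1 it suffices that $k=i+1$ (a single aggregated node at level $u_i$), or that all backward neighbours of $u_k$ other than $u_i$ are black and already mapped in $M$. For Case 4.2, $WT(u_k)=\{u_i\}$ suffices and gives width exactly $W_M(u_{k-1})$.
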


\begin{example}
\begin{figure}
    \begin{subfigure}[c]{0.48\columnwidth}
        \centering
        \includegraphics[width=0.95\linewidth]{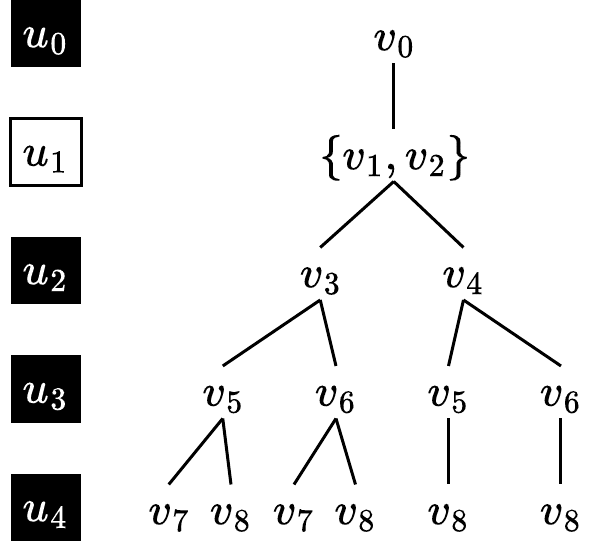}
        \caption{Search tree ($u_1$ is white).}
        \label{fig:appendix_search_tree_2}
    \end{subfigure}
    \hspace{0.1cm}
    \begin{subfigure}[c]{0.48\columnwidth}
        \centering
        \includegraphics[width=0.62\linewidth]{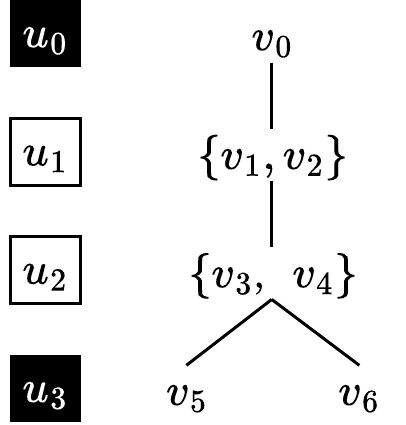}
        \caption{Partial search tree ($u_1$ and $u_2$ are white).}
        \label{fig:appendix_search_tree_3}
    \end{subfigure}
    \vspace{0.1cm}
    \begin{subfigure}[c]{0.48\columnwidth}
        \centering
        \includegraphics[width=\linewidth]{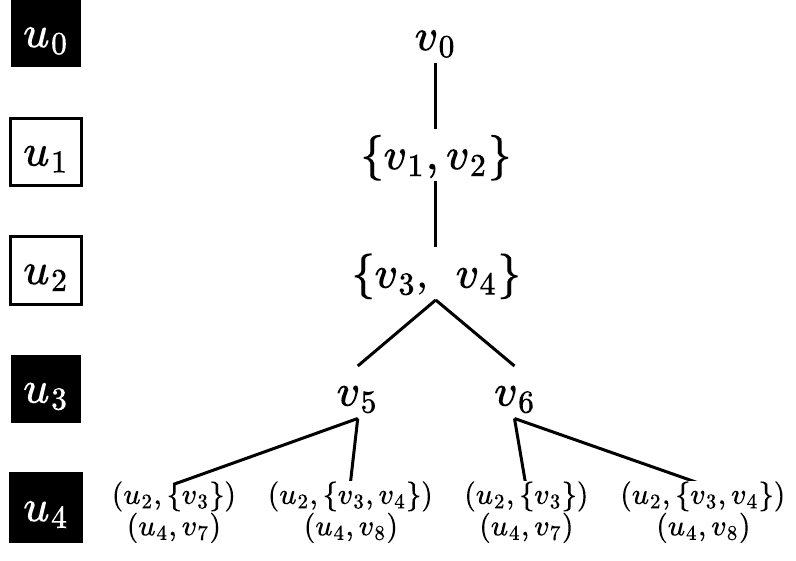}
        \caption{Search tree ($u_1$ and $u_2$ are white).}
        \label{fig:appendix_search_tree_4}
    \end{subfigure}
    \hspace{0.1cm}
    \begin{subfigure}[c]{0.48\columnwidth}
        \centering
        \includegraphics[width=\linewidth]{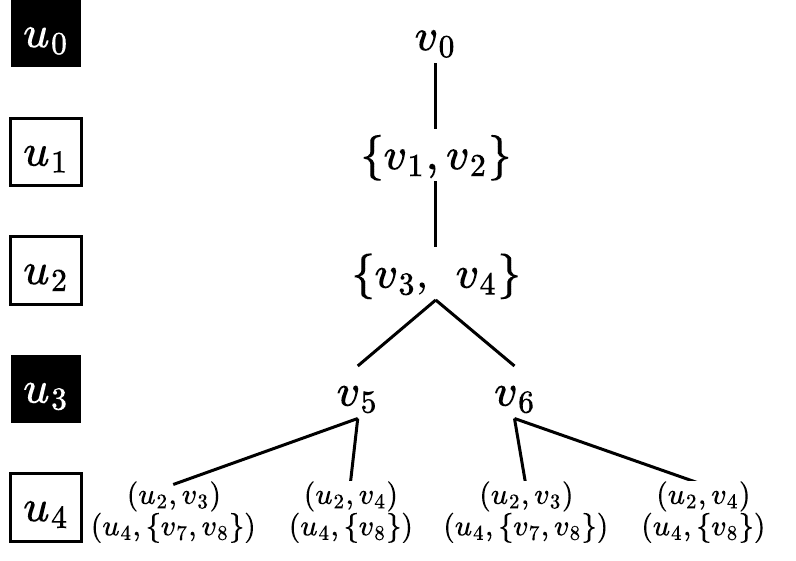}
        \caption{Search tree ($u_1$, $u_2$, and $u_4$ are white).}
        \label{fig:appendix_search_tree_5}
    \end{subfigure}
    \caption{Examples illustrating the reduction of search trees achieved by \texttt{CEM} under different vertex encoding.}
    \label{fig:appendix_search_tree_again_example}
    \vspace{-0.4cm}
\end{figure}
\figref{fig:appendix_search_tree_again_example} illustrates \thmref{the:search_size} using three small examples.
\begin{itemize}[leftmargin=10pt]
    \item Compared with the search tree in \figref{fig:search_tree_all_black}, merging the mappings of the white vertex $u_1$ reduces the search subtree size $S_{\{(u_0, v_0)\}}$ by half, since $|R_{\{(u_0, v_0)\}}(u_1)| = 2$ (\figref{fig:appendix_search_tree_2}).
    \item When both $u_1$ and $u_2$ are encoded as white, the last level of the search tree in \figref{fig:appendix_search_tree_4} is derived from the partial search tree in \figref{fig:appendix_search_tree_3}. Since the mapping of $u_4$ is used to trim the mappings of the white vertex $u_2$, we use abbreviated notation in the figure for simplicity. Let $M = \{(u_0, v_0), (u_1, \{v_1, v_2\})\}$. Compared with \figref{fig:appendix_search_tree_2}, encoding $u_2$ as white reduces the extension widths $W_M(u_2)$ and $W_M(u_3)$ by half, and the extension width at level $u_4$ becomes $(4/2) \times 2 = 4$, instead of 6 in \figref{fig:appendix_search_tree_2}.
    \item If $u_4$ is further encoded as white, as shown in \figref{fig:appendix_search_tree_5}, the extension width at the final level is also reduced to $(4/2) \times 2 = 4$.
\end{itemize}
\end{example}

\subsection{Space Complexity of \texttt{CER}}

We analyze the space complexity of extending a partial embedding $M$ with a query vertex $u$, assuming that $u.f = \texttt{true}$, which indicates that the common extension buffer for $u$ is enabled.
\begin{itemize}[leftmargin=10pt]
    \item If $u$ falls into Case 1 or Case 2, we store $R_M(u)$ in $CEB(u).b$. The resulting space complexity is $O(|R_M(u)|)$, which is bounded by $O(|C(u)|)$ or $O(|V(G)|)$.
    \item If $u$ falls into Case 3, we need to store $\mathcal{E} = \{ M_v \oplus v \mid v \in R_M(u) \}$ (see \appref{sec:pse_combined_algo}), where $M_v$ denotes the reduced embedding derived from $M$ and associated with $v \in R_M(u)$. The space complexity is therefore $O(\sum_{v \in R_M(u)} |M_v \oplus v|)$, which is bounded by $O(|C(u)| \times |V(G)|^{|V(Q)|-1})$, and hence $O(|V(G)|^{|V(Q)|})$ in the worst case, since the match size of each white vertex in $M_v$ is at most $O(|V(G)|)$.
    \item If $u$ falls into Case 4, the space complexity is identical to that of Case 3, namely $O(|V(G)|^{|V(Q)|})$.
\end{itemize}
Note that the above analysis represents the worst case. In practice, the actual space consumption is typically much smaller due to effective filtering and pruning strategies.

\section{Proofs of Lemmas}

\subsection{Proof of \lemref{lemma:redundant}}
\label{proof:redundant}

\begin{proof}
If $M_1 \oplus v$ is a valid match of $Q_{i+1}$, then by the definition of extensible vertices, we have $v \in \bigcap_{u \in N_-^O(u_i)} \mathcal{A}_{u_i}^u(M_1[u])$.
Since $M_1[u] = M_2[u]$ for each $u \in N_-^O(u_i)$, $v$ is also extensible for $u_i$ under $M_2$. Given that $v$ does not conflict with any vertex in $M_2$, we conclude that $M_2 \oplus v$ is also a valid match of $Q_{i+1}$.
\end{proof}

\subsection{Proof of \lemref{lem:contained_vertex_pruning}}
\label{proof:pruning}
\begin{proof}
We first claim that the extensible vertices for each $u_j \in Con(u_i)$ must be a subset of $R_M(u_i)$. This follows from the following containment chain:
\begin{align}
    \bigcap_{u\in N_-^O(u_j)} \mathcal{A}_{u_j}^u(M'[u]) 
    &\subseteq \bigcap_{u\in N_-^O(u_j)\cap \{u_0, \cdots, u_{i-1}\}} \mathcal{A}_{u_j}^u(M[u]) \notag \\
    &\subseteq \bigcap_{u\in N_-^O(u_i)} \mathcal{A}_{u_i}^u(M[u]) \subseteq R_M(u_i),
\end{align}
where $M'$ denotes an extension of $M$ (i.e., $M \subseteq M'$). This means that any extensible vertex of $u_j$ must belong to $R_M(u_i)$. 

If $|R_M(u_i)| < |Con(u_i)|$, one of the following two situations must occur during the future enumeration:
\begin{enumerate}[leftmargin=14pt]
    \item Some $u_j \in Con(u_i)$ will have an empty extensible set;
    \item Two vertices in $Con(u_i)$ are forced to share the same mapping in $R_M(u_i)$, violating the injectivity constraint.
\end{enumerate}
In either case, no valid embedding can be constructed, and thus the current search branch can be safely pruned.
\end{proof}

\section{Experiments Using EPS Metric}
\label{sec:eps_exp}

The embeddings-per-second (EPS) metric is not included in the main paper due to its sensitivity to extreme values. For completeness, we report the EPS comparison in \figref{fig:eps_comp}. Overall, \method achieves higher EPS than other methods in most cases, indicating competitive execution efficiency.

For BSX, the EPS values can be strongly influenced by a few extreme cases. For instance, on the EU2005 dataset, certain queries result in more than $10^{53}$ embeddings in a single batch, which can lead to misleading averages. While BSX experiences a number of timeout cases, their effect on the averaged EPS is negligible in comparison to that of extreme values.  

\begin{figure}
    \centering
    \begin{subfigure}[c]{0.85\linewidth}
        \centering
        \includegraphics[width=\linewidth]{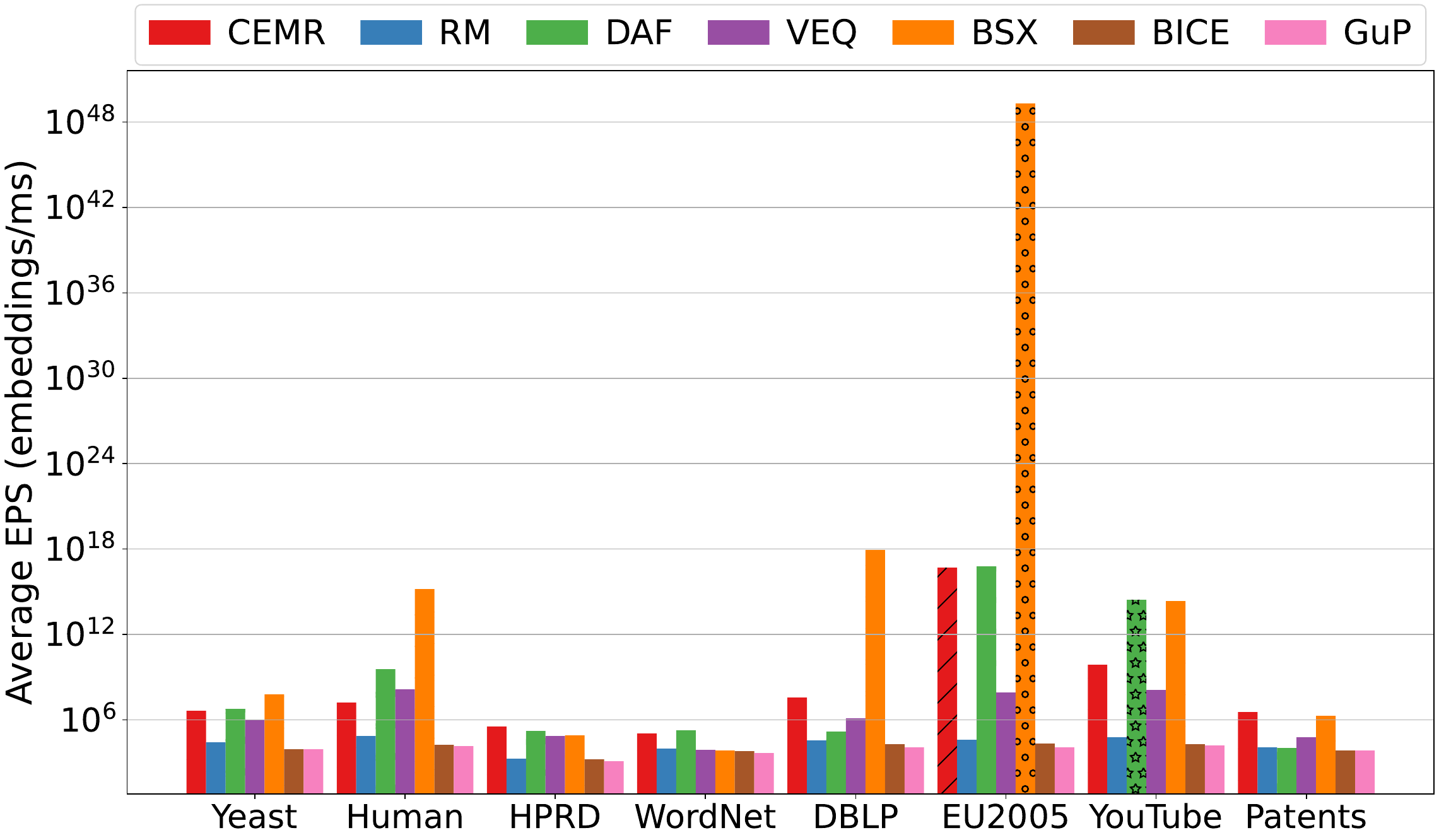}
        \captionsetup{skip=1.0pt}
        \caption{Mean EPS shown as a bar plot.}
        \label{fig:eps_bar}
    \end{subfigure}
    \begin{subfigure}[c]{0.85\linewidth}
        \centering
        \includegraphics[width=\linewidth]{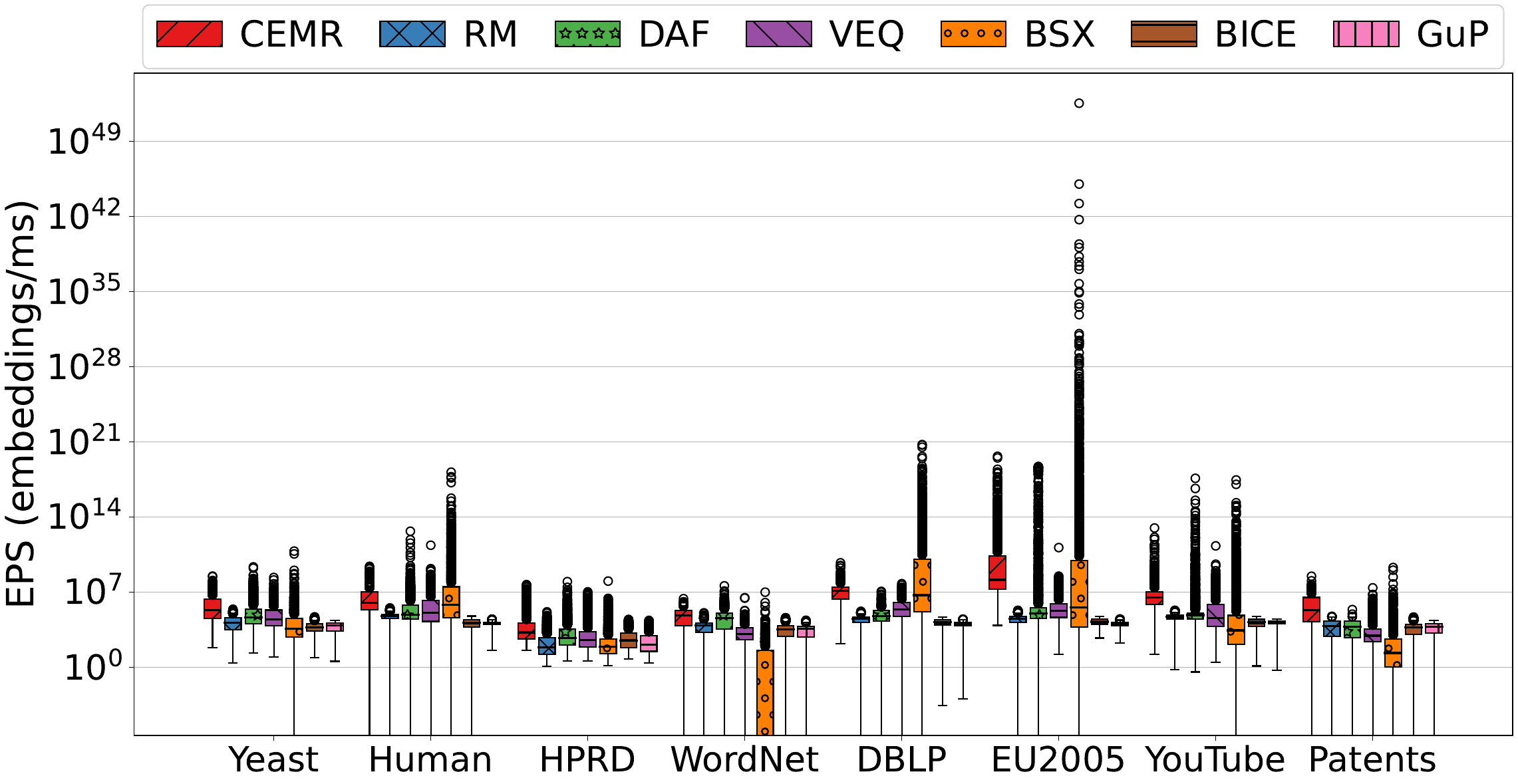}
        \captionsetup{skip=1.0pt}
        \caption{Distribution of EPS shown as a box plot, with outliers marked as circles.}
        \label{fig:eps_box}
    \end{subfigure}
    \captionsetup{skip=1.1pt}
    \caption{EPS comparison. EPS is set to zero for timeout cases.}
    \label{fig:eps_comp}
\end{figure}

\section{Discussions}
\label{sec:discussion_extend_to_system}

Modern graph database systems commonly rely on worst-case-optimal joins (WCOJ) \cite{ngo2018worst} as the core execution paradigm for subgraph queries. WCOJ evaluates multiway joins in a vertex-at-a-time manner and has been shown to achieve optimal worst-case complexity for join processing, making it particularly suitable for subgraph matching workloads.

It is straightforward to incorporate \method into existing engines that employ DFS-based WCOJ, such as gStore for processing limit-k queries \cite{yang2022gcbo}, where limit-k queries refer to queries containing a \texttt{LIMIT} clause. In these systems, query execution incrementally expands a partial match by introducing one additional query vertex at each step. This execution pattern is essentially identical to enumeration-based subgraph query processing, as illustrated in \algref{algo:generic_framework} \cite{sun2020rapidmatch}.

Within this class of systems, \method can be integrated by extending the standard \textit{ExtendOneNode} operator. When the operator processes a query vertex $u$, we augment it with two auxiliary indicators. The first indicator specifies the extending case of $u$, corresponding to the four cases defined in \secref{sec:four_cases}. The second indicator determines whether the \texttt{CER} mechanism (\secref{sec:reuse}) is enabled for $u$ during extension.

Although \method is designed for DFS-based enumeration, it can also be applied to systems that originally adopt BFS-based enumeration by switching their execution mode to DFS-based. Moreover, some systems employ hybrid execution plans that combine worst-case-optimal joins with binary joins, such as K{\`u}zu \cite{feng2023kuzu} and EmptyHeaded \cite{aberger2017emptyheaded}. In these systems, \method can be applied to pure WCOJ subplan trees, where no binary joins are involved, or where the binary join appears only as the left-deep operator.

}

\end{document}